\begin{document}
\title{Reputation Aggregation in Peer-to-Peer Network Using Differential Gossip Algorithm}

\author{Ruchir~Gupta, Yatindra~Nath~Singh,~\IEEEmembership{Senior Member,~IEEE,}

\thanks{Ruchir Gupta and Yatindra Nath Singh are with the Department of Electrical Engineering,
Indian Institute of Technology, Kanpur, India.}%
\thanks{E-mail:{\{rgupta, ynsingh}\}@iitk.ac.in}}

\newtheorem{theorem}{Theorem}[section]
\newtheorem{lemma}[theorem]{Lemma}
\newtheorem{proposition}[theorem]{Proposition}
\newtheorem{corollary}[theorem]{Corollary}

\IEEEcompsoctitleabstractindextext{%
\begin{abstract}
Reputation aggregation in peer to peer networks is generally a very time and resource consuming process. Moreover, most of the methods consider that a node will have same reputation with all the nodes in the network, which is not true. This paper proposes a reputation aggregation algorithm that uses a variant of gossip algorithm called differential gossip. In this paper, estimate of reputation is considered to be having two parts, one common component which is same with every node, and the other one is information received from immediate neighbours based on the neighbours' direct interaction with the node. The differential gossip is fast and requires less amount of resources. This mechanism allows computation of independent reputation value by a node, of every other node in the network, for each node. The differential gossip trust has been investigated for a power law network formed using preferential attachment \emph{(PA)} Model. The reputation computed using differential gossip trust shows good amount of immunity to the collusion. We have verified the performance of the algorithm on the power law networks of different sizes ranging from 100 nodes to 50,000 nodes.
\end {abstract}
\begin{keywords}
Trust, Reputation, Differential Gossip, Free Riding, Collusion.
\end{keywords}}

\maketitle
\IEEEdisplaynotcompsoctitleabstractindextext
\IEEEpeerreviewmaketitle
\section{Introduction}
\IEEEPARstart{P}{eer}-to-peer systems have attracted considerable attention in recent past as these systems are more scalable than the client-server systems. But, free riding has emerged as a big challenge for peer-to-peer systems \cite{Feldman:2005:OFB:1120717.1120723,10.1109/MIC.2009.33}. Tendency of nodes to draw resources from the network and not giving anything in return is termed as ’Free Riding’. As nodes have conflicting interests so, selfish behaviour of nodes leads to problem of free riding \cite{conflictinginterest}. This behaviour of nodes can be explained by famous prisoners' dilemma \cite{osborne}. In a file sharing network, if nodes are considered as players, their NE will be the strategy where none of them will share the resources \cite{Tang}. Experimental studies \cite{Adar00freeriding,Hughes2005} on Gnutella network confirmed this fact. 

Trust or reputation management systems can be used in peer to peer networks to overcome the problem of free riding as well as to ward off some of the attacks \cite{10.1109/TPDS.2008.60,1209003}. Such reputation management systems have been in e-commerce portals like e-bay \cite{ebay}, but they have the advantage that they are based on client server architecture. In peer-to-peer file sharing networks, as there is no central server or repository, trust has to be estimated and stored by each node in a distributed fashion, and all such trust values need to be aggregated to build an effective reputation management system. Aggregation of trust generally consumes a lot of time and memory especially with large number of nodes. Apart from it, existing methods assume that the reputation of a peer must have a global value, i.e. the peers behave uniformly with all other peers, but this is not true. In this paper, we propose a method which aggregates the trust estimated by a node directly, trust reported by neighbours and trust averaged using a variation of gossip algorithm to make trust vectors at each node. Studies show that unstructured peer-to-peer networks generally follow \emph{Preferential Attachment} (PA) 
model. In particular Gnutella has a power law degree distribution $f(d) = d^{-\alpha}$ with $\alpha =2.3$ \cite{Powerlaw, Powerlaw1, Powerlaw2}. Therefore this algorithm has been simulated for networks having power law degree distribution, i.e. networks formed by Preferential Attachment (PA) model \cite{Barabasi, PreA}. According to \cite{PreA}, a graph $G^{m}_{N}$ evolves from $G^{m}_{N-1}$ when a new node with $m$ edges joins the network. Here $m$ is the number of connections that the new node will make at the time of joining the network. The joining nodes chooses a node $i$ with probability $P_{i}$ given by 
$P_{i}=\frac{degree\; of\; node\; i\; before \; this \; connection \; is\; made}{sum\; of\; degree\; of\; all\; the\; nodes\; before\; this\; connection\; is\; made}$.  

Remainder of this paper is organised as follows. Section two presents the related work in reputation management. Section three describes the system model. Section four, proposes the differential gossip trust algorithms for the aggregation of trust. Section five presents the analysis of the algorithms and numerical results. Section six concludes the presented work.

\section{Related Work}
Many methods have been proposed in the literature \cite{eigentrust,PEERTRUST,FuzzyTrust,powertrust,gossiptrust, pet} for the reputation aggregation. Eigen-Trust \cite{eigentrust} depends largely on pre-trusted peers i.e. peers that are globally trusted. This is scalable to a limited extent. Peer-Trust \cite{PEERTRUST} stores the trust data (i.e. trust values of all the peers in the network) in a distributed fashion. This is performed using a trust manager at every node. In Peer-Trust, hash value of a node id is calculated to identify the peer where the trust value of the node will be stored. Song \emph{et.al.} \cite{FuzzyTrust} used fuzzy inference to compute the aggregation weights. These weights are assigned on the basis of global trust value of the opining peer, transaction amount and date of transaction with the peers being evaluated. In Fuzzy-Trust, each peer maintains a local trust value and the transaction history with the remote peers. At the time of aggregation, system queries for the trust value, from the qualified peers, for the peer being evaluated, and combine the received values using aggregation weights to compute the updated trust values. The trust value calculated in this way is global in nature. Zhou \emph{et.al.} \cite{powertrust} identified the power law distribution in the users' feedback by observing eBay transaction traces. Power Trust leverage on this distribution. Every node keeps the record of local trust scores of the nodes with whom it has interacted, as per the quality of service obtained from them. Global reputation is calculated by weighted aggregation of local trust scores. The global reputation of opining node is taken as weight. Similar to fuzzy-trust, trust value generated here are global and all peers in the network use the same value for all the transactions. Gossip Trust \cite{gossiptrust} uses push gossip algorithm as given in \cite{Kempe} for aggregation in complete graph based network. This ensures fast reputation aggregation with low message overhead. This technique also calculates the global reputation of a node by weighted aggregation of opinions of different nodes. It also uses the bloom filter architecture for the efficient ranking. \cite{Kempe} analyses gossip algorithm for aggregation of information in complete graph based network and also comes up with a upper bound on convergence. This paper also studies diffusion in the network.

As evident, generally the earlier work \cite{eigentrust,PEERTRUST,powertrust,gossiptrust} assumes that the reputation of a peer must have a global value, i.e. peers behave uniformly with all the peers. But this is not the case due to nodes being selfish in nature. A peer behaves with different decency levels with different peers. The same holds true for the opinion as well, i.e. peer gives different weights to opinions of different peers. These aspects have been taken into account in our algorithm called Differential Gossip Trust. The current work combines local, reported value from trusted neighbours and global trust using the new variation of gossip to define a novel trust aggregation algorithm.
\section{System Model}
In this paper, we are studying a peer-to-peer network. Typically, there will be millions of nodes in a peer-to-peer network. These nodes are assumed to be connected by a network graph $G_N ^m$ generated by PA (preferential attachment) process \cite{Powerlaw} for $m\geq 2$. Here, all the nodes whose addresses are stored by a node, are considered to be neighbours of the node. Generally the nodes will have small number of neighbours.

There is no dedicated server in this network. Peers in this network are rational, i.e. they are only interested in their own welfare. They are connected to each other by an access link followed by a back bone link and then again by an access link to the second node. We are assuming that the network is heavily loaded i.e. every peer has sufficient number of pending download requests, hence these peers are contending for the available transmission capacity. We also assume that every peer is paying the cost of access link as per the use (for both download and upload as per the billing practice of most of the service providers). Downloaded data is more valuable than the cost of access link. Moreover, data that is of interest to peer is always available. So, every peer wants to maximises its downloads and minimise its uploads so that it can get maximum utility of its spending. This optimisation leads to problem of free riding.

If a node is downloading, some other node has to upload. So the desired condition is that the download should be equal to upload for a node. Usually this means that there is no gain. Even in this scenario, the node gains due to interaction with others, as the chances of survival of any entity is more with communication capability. Thus interaction itself is an incentive. A node will usually try to get the content and avoid uploading to maximise gain. Thus free riding becomes optimal strategy. So a reputation management system need to be enforced to safeguard the interest of every node by controlling the free riding behaviour.

In a reputation management system, every node maintains a reputation table. In this table, a node maintains the reputation of the nodes with whom it has interacted. Whenever it receives a resource from some node, it adjusts the reputation of that node accordingly. When another node asks for the resource from this node, it checks the reputation table and according to the reputation value of the requesting node, it allocates resource to the other node. This ensures that every node is facilitated from the network as per its contribution to the network and consequently free riding is discouraged.

For using such a reputation management system, the nodes need to estimate the trust value of the nodes interacting with them. There are number of ways to estimate the reputation \cite{ruchir2}. We assume that trust value observed by node a $i$ for the node $j$ can be defined as $t_{ij}$.

%
\section{Aggregation of Trust}
 Whenever a node needs a resource, it asks from its neighbours; if they have the resource, the node gets the answer of its query. If neighbours do not have it, they forward the query to their neighbours and so on. The node that have the resource, replies back to the requesting node. The requesting node now asks for the resource from the node having the resource. The answering node provides the resource now directly  according to the reputation of the node. 

If a node receives a request from another node that is not its neighbour, the reputation of that node needs to be estimated some how in order to decide the quality of service to be provided. If two nodes are going to transact for the first time they should have reputation of each other. This can be done by getting the reputation of node from neighbours and then using it to make an initial estimate. When for a node, multiple trust values are received, we need an aggregation mechanism to get the trust value. Trust value should always lie in between zero and one.

For the whole network, we can define a trust matrix of dimensions $N\times N$. Here $t_{ij}$ represents the trust value of $j$ as maintained by $i$ based on direct interaction. This matrix is generally sparse in nature as generally a node
will have very small number of neighbours being directly transacted with as compared to total number of nodes in the network. It may be noted that $t_{ij}$ is estimated based on transaction between nodes $i$ and $j$ and can be called as local trust value.  These trust values will be propagated and aggregated by all the nodes in a network. The trust estimate which should be actually used will be based on aggregation of local trust values and trust estimates received from neighbours. For the reputation information received from direct neighbours, the weights can be assigned based on neighbours' reputation.

\subsection{Differential Gossip Trust}
We have modified the gossip based information diffusion algorithm to  allow faster diffusion of the trust values enabling faster estimation of global trust vectors at all the nodes. The algorithm can be divided into two parts. In first part, we will discuss about the method of information diffusion whereas in the second part, we will discuss about the information that is to be diffused.
\subsubsection{Differential Gossip Algorithm}
Gossip Algorithms are used for spreading information in large decentralised
networks. These algorithms are random in nature as in these, nodes randomly choose their communication partner in each information diffusion step. These algorithms are generally simple, light weight and robust for errors. They have less overhead compared to the deterministic algorithms \cite{Kempe}, \cite{Shah}. The gossip algorithms are also used for the distributed computation like taking average of the numbers stored at different nodes. These algorithms are suitable for the computation of reputation vector in the peer-to-peer networks \cite{gossiptrust}. 

There are three types of gossip algorithms: push, pull and push-pull. In push
kind of algorithms, in every gossip step, nodes randomly choose a node from its neighbours and push a piece of information to it. Whereas in pull algorithms, nodes take the information from one of the randomly selected neighbouring node. Both these processes happen simultaneously in the push-pull based algorithms.


Chierichetti \emph{et.al.} \cite{Flavio} stated that in a PA model based network, push or pull alone will take long in spreading the information in the network. If the push model is implemented and the information is with a power node, it will take many rounds in pushing information to low degree nodes. If pull model is being used, and the information is with low degree node, it will again take many rounds for a power nodes to pull the information. This phenomenon will be evident in the average computation using push or pull gossip algorithm as information of every node need to be distributed to every other node.

To avoid this problem we propose differential push gossip algorithm. In this algorithm, every node makes different number of pushes depending upon the ratio of  its own degree to the average neighbour degree. in a single gossiping step.  If every node also pushes its degree to all the neighbouring nodes, then each node can estimate the average degree of all its neighbours. Here, we make three assumptions, i) every node has a unique identification number known to every other node. So, if some node pushes some information about another node, receiving node knows that this information is about which particular node; ii) time is discrete; and iii) every node knows about the starting time of gossip process.

All nodes that have some feedback about a single node, gossip their feedback about that single node. All the nodes estimate the global reputation of the single node based on the outcome of gossip. Let the feedback about the $j^{th}$ node by node $i$ be $y_{ij}$. If it does not have any feed back about $j$ it keeps the value of $y_{ij}$ as $0$. Every node that has feed back about node $j$ assumes the gossip weight $g_{ij}$ as $1$ and rest of the nodes assume  the gossip weight as zero. It is done so that as a result of gossip, every node coverages to the ratio of summation of all gossiped values as well as summation of all gossip weights averaged over all the nodes $N$. The ratio of two converged values , i.e. $\frac{\sum\nolimits_{i}y_{ij}}{\sum\nolimits_{i}g_{ij}}$ gives the sum of gossiped values averaged over node who started with gossip weight of unity. If only one of the nodes assumes gossip weight as unity and remaining as zero, all nodes will be able to estimate the summation of gossiped values.

Every node has $y_{ij}$ and $g_{ij}$ as information to be gossiped. Let us call this pair as gossip pair. The ratio of  gossip pair is tracked in every step to decide on convergence. Every node, first calculates the ratio $k$ of its own degree and average degree of its neighbours. As $k$ will be a real number, it is rounded off to nearest integer if $k\ge 1$. For all other cases, $k=1$. The node chooses k nodes randomly in its neighbourhood and sends ($\frac{1}{k+1}y_{ij},\frac{1}{k+1}g_{ij}$) as gossip pair to all randomly selected $k$ nodes and itself. The node is also considered as one of the neighbours of itself.

After receiving all gossip pairs from different nodes including itself, the node sums up all the pairs. This summation now becomes new gossip pair. The ratio of this gossip pair is the value that node has evolved in this step. If at least one gossip pair has been received from a node other than itself, the condition of convergence will be checked between the ratios of this step and previous step with a predefined error constant. If convergence condition is satisfied, it means a node needs not to run the gossip process any more for its convergence. But once convergence is achieved by a particular node, convergence of other nodes is not assured. Hence, if a node will stop gossiping, convergence of its neighbours may suffer. To avoid this problem, once a node gets converged, it will announce among all its neighbours that it has achieved convergence. Every neighbour will note this announcement. When a node finds that itself and all its neighbours have converged, it will stop the gossip process.

When a round of gossiping starts it takes some time to complete. After gossiping, nodes get a value that is used till the next new value is converged upon at the end of next round. After the end of a round, next round of gossip will start after some time. The time difference between the two rounds will depend upon the change in the behaviour of the nodes in the network and the number of new nodes coming in the network per unit time. For simplicity, this time difference has been taken as a constant. In reality, this should be dynamically adjusted.
\subsubsection{Differential Reputation Aggregation}
When a node $j$ requests resource from another node $i$, the node $i$ needs the reputation of node $j$ so that it can decide the quality of service to be offered to node $j$. There are two possible conditions between node $i$ and node $j$. First, node $j$ may have served node $i$ earlier and hence node $i$ has some trust value about node $j$. In this case there is no problem for node $i$ and node $i$ will serve as per the reputation available with it. Second, node $i$ and node $j$ are unknown to each other. In this case node $i$ needs general reputation about node $j$. For this, aggregation of reputation is needed. 

Aggregation of reputation should not be resource intensive and should be immune to collusion and whitewashing. We can have two kind of options for this. First, by gossiping all the nodes can reach a consensus about the reputation of node $j$ \cite{gossiptrust}. Second all the nodes exchange their reputation tables about the nodes they have interacted with and this process should always be running, like executing a routing protocol at network layer. First process is more vulnerable to collusion where as second process is resource intensive. As we can see that unstructured  peer-to-peer network is very similar  to human network, thus we can observe the human behavioural strategies to identify the solution. In human network when we need the reputation value of some body we rely on personal experience with him. If we don't have any personal experience with him, we rely on two things. First the general perception about him which we receive from gossip flowing around and second the information given by our friends if they have any direct interaction with him. We combine these two and act accordingly. 

Nodes follow the same kind of approach in our proposed algorithm. The nodes gather opinion of their neighbours and combine it with the opinion, obtained from general gossip after weighing the neighbours' opinion according to the confidence in the neighbours. In general, it can be said that a node gives weight to every node in the network. The nodes that have not interacted with it are given weight as $1$ where as those which have interacted are given weight according to the confidence in them (always $\ge 1$).

Let us consider that there are $N$ nodes in the network. Every node periodically calculates the trust value of the other nodes on the basis of quality of service provided by them against the requests made. Let us assume that $t_{ij}$ is the trust value measured by node $i$ for node $j$. Here $t_{ij}$ ($1\leq i,j \leq N$) will always lie between 0 and 1 such that the $t_{ij}=1$ will represent the complete trust in node $j$, whereas, $t_{ij}=0$ will represent no trust in the node $j$.

If a node 'A' has not transacted with a node 'B', then the trust value of node 'B' will also remain $0$ with the node 'A'. This initial value is taken as $0$ to avoid the white washing attack. This initial value can also be taken as higher than zero and can be dynamically adjusted thereafter as per the level of whitewashing in the network. In this paper, we have not studied this aspect.

We will discuss the algorithm in four steps to make it simple to understand. In the first step, global reputation aggregation for a single node will be discussed. In the second step, we will discuss globally calibrated local reputation aggregation for this single node. In the third step, simultaneous global reputation aggregation for all the nodes will be discussed, and finally in the fourth step, simultaneous aggregation of globally calibrated local reputation aggregation for all the nodes will be discussed.
 
 In the first step to keep things the simple, we assume that weights of all the nodes for every node be $1$. This leads us to calculation of global reputation ($\bf R_{global}$) of a node. This can be equivalently represented using a matrix vector multiplication as follows,
 \begin{equation}
 \label{gr}
{\bf R_{global}}(n+1)=\frac{1}{N}({\bf t^{T}}(n)\times{\bf 1_{N \times 1}}).
 \end{equation}
 Here n is the time instant and ${\bf R_{global}}$ is the global reputation vector containing global reputations of different nodes. Let us assume that $R_{i}$ is global reputation of $i^{th}$ node i.e. $i^{th}$ element in $\bf R_{global}(n+1)$ column vector. ${\bf 1_{N \times 1}}$ is a vector of $N$ $1's$, i.e. $[1 1 1 1.......]^{T}$. Differential Gossip algorithm can be used for doing this computation in distributed fashion. This process is shown in algorithm~\ref{alg1}.
\begin{algorithm}
\caption{Global Reputation aggregation for a single node}
\label{alg1}
\begin{algorithmic}
\REQUIRE $t_{ij}$ (The reputation estimated by node $i$ for node $j$ only on the basis of direct interaction) for $1\leq i \leq N$, gossip error tolerance $\xiup$
\ENSURE Global Reputation of node $j$ ($R_{j}$)
\IF {$i$ has some reputation value about $j$}
\STATE Assume weight $g_{ij}=1$, and $y_{ij}=t_{ij}$
\ELSE
\STATE Assume weight $g_{ij}=0$, and $y_{ij}=0$
\ENDIF
\STATE Push self degree to neighbouring node
\STATE Take the average of neighbours' degree
\STATE Calculate the ratio of its degree and average of neighbour degree ($k_{i}\shortleftarrow
\frac{degree\; of\; i}{average\; neighbour\; degree}$)
\STATE Round off $k_{i}$ to nearest integer for  $k_{i}\geq 1$ else take $k_{i}=1$
\STATE $m\shortleftarrow 1$
\COMMENT {Initialise Gossip Step}
\STATE $u\shortleftarrow \frac{y_{ij}}{g_{ij}}$ for nodes having $g_{ij}\ne 0$; otherwise $u\shortleftarrow 10$.
\REPEAT
\STATE Do for node $i$ 
\STATE $(y_{sj},g_{sj})$ are all pairs (of gossip weight and gossip value) received by the node $i$ in the previous step
\STATE $y_{ij}\shortleftarrow \sum\limits_{s\in S}y_{sj};\;g_{ij}\shortleftarrow
\sum\limits_{s\in S}g_{sj}$
\COMMENT{update gossip pairs}
\COMMENT{S is the set of nodes sending the gossip to $i$}
\STATE choose $k_{i}$ random nodes in its neighbourhood
\STATE send gossip pair ($\frac{1}{k_{i}+1}y_{ij},\frac{1}{k_{i}+1}g_{ij}$) to all $k_{i}$
nodes and also to itself
\STATE $m\shortleftarrow m+1$
\COMMENT {increment the gossip step}
\IF {$|S|> 1$}
\IF {$|\frac{y_{ij}}{g_{ij}}-u|\leq \xiup$}
\STATE {Inform all neighbours about self convergence}
\ENDIF
\ENDIF 
\STATE $u\shortleftarrow \frac{y_{ij}}{g_{ij}}$ for nodes having $g_{ij}\ne 0$; otherwise $u\shortleftarrow 10$.
\UNTIL
 {Self convergence and all neighbours' convergence has happened}
\STATE {\bf output $R_j=\frac{y_{ij}}{g_{ij}}$}
\end{algorithmic}
\end{algorithm}

Although each node considers the average of feedbacks from every other node in the network, it is desirable to assign different weights to the direct feedbacks received from neighbouring nodes. The direct feedback from a node is based on the direct interaction which it had experienced. The weights can be assigned by a node on the basis of number and quality of transactions made with the other node who is providing the feedback. The trust value of a node is a good metric of quality and number of transactions. The weights for different nodes can be derived on the basis of the trust values of these nodes. Same idea is used in second variation of algorithm where nodes estimate globally calibrated local reputation vector. So in second step, we propose the weight $w_{ij}$ to be of the form:
\begin{equation}
\label{weight}
w_{ij}=a_i^{b_{ij}\cdot t_{i j}}.
\end{equation}
Here $a_i$ and $b_{ij}$ are two parameters that a node can decide on its own. First parameter can be adjusted according to the overall quality of service received by the node from the network, whereas second parameter can be adjusted according to the recommendation of a particular neighbour and quality of service from the network. So the second parameter will be adjusted for every neighbour independently. In this paper, $a_i$ and $b_{ij}$ has been taken as the constants for every node for simplicity. Salient features of this scheme are as follows.
\begin{itemize}
\item Even if a node has no neighbourhood relationship with the estimating node, its feed back will still get some consideration.
\item If a node has bad reputation with the estimating node, its feedback will have weight close to the node which have no neighbourhood relation with the estimating node.
\item Nodes with higher reputation will be given higher weights and it will help in making better quality of service groups.
\item Values of $a_i$ and $b_{ij}$ can be dynamically adjusted by nodes as per their requirement. Though in this work, $a_i$ and $b_{ij}$ have been taken as constants.
\item Collusion will be significantly reduced. 
\end{itemize}
A weighted trust matrix, that is different at every node, is formed by multiplying trust values with weights i.e. for node $I$, the $ij$ element in weighted matrix will be $W_{Iij}$ such that,
\begin{equation}
W_{Iij}=w_{Ii}\times t_{ij}.
\end{equation}
A node gives high weight to the feedback given by those nodes which have provided better quality of service. This leads to the calculation of globally calibrated local reputation vector. It is a collection of the reputation of all the nodes in the network according to received feedback about the node and the weights of nodes giving feedback to the calculating node. It means if some node I is calculating globally calibrated local reputation vector, the $j^{th}$ element of this vector will be
\begin{equation}
\label{weigh}
Rep_{I,j}=\frac{\sum\limits_{i}W_{Iij}}{\sum\limits_{i}w_{Ii}}.
\end{equation}
Now globally calibrated local reputation at node I, ${\bf R_{gclr\;I}}$ can be equivalently represented as the matrix vector multiplication
\begin{equation}
\label{gclr}
{\bf R_{gclr\;I}}(n+1)=\frac{1}{Sum_I}({\bf W_{I}^{T}}(n)\times{\bf 1_{N \times 1}}).
\end{equation}
Here $Sum_I=\sum\limits_{i}w_{Ii}$. It is interesting to see that if we consider the weights of all nodes as $1$ in (\ref{gclr}), this equation degenerates to (\ref{gr}).

Each node will have four different kind of data about other nodes - first $t_{ij}$ i.e. the trust value as result of direct interaction, second $y_{ij}$, the intermediate variable for gossiping and third $g_{ij}$, gossiping weight. In the start of every gossiping round  $y_{ij}$  assume the value of $t_{ij}$ whereas $g_{ij}$ will be $1$ only for one of the values of $i$, and $0$ for all the others. This will happen for all values of $i$ and $j$. Fourth value $Rep_{ij}$ is obtained after gossiping and consideration of neighbours' opinion. It will also be maintained at every node $i$. Apart from these four entities, we also wants to count the total number of nodes opining about node $j$. For this purpose every node that have opined about node $j$ will assume $count_{ij}=1$ and others will assume $count_{ij}=0$ and hence in the process of gossip, all these $1$'s will sum up and we will get the count.

Equation (\ref{weigh}) can be alternatively represented as,
\begin{eqnarray}
\nonumber Rep_{I,j}&=&\frac{\sum\limits_{i\in NS_I}W_{Iij}+\sum\limits_{i\notin NS_I}W_{Iij}}{\sum\limits_{i\in NS_I}w_{Ii}+\sum\limits_{i\notin NS_i}w_{Ii}},\\
\nonumber &=& \frac{\sum\limits_{i\in NS_I}w_{Ii}\times t_{ij}+\sum\limits_{i\notin NS_I}w_{Ii}\times t_{ij}}{\sum\limits_{i\in NS_I}w_{Ii}+\sum\limits_{i\notin NS_I}w_{Ii}},\\
\nonumber &=& \frac{\sum\limits_{i\in NS_I}(w_{Ii}-1)\times t_{ij}+\sum\limits_{i\notin NS_I}(w_{Ii}-1)\times t_{ij}+\sum\limits_{i} t_{ij}}{\sum\limits_{i\in NS_I}(w_{Ii}-1)+\sum\limits_{i\notin NS_I}(w_{Ii}-1)+\sum\limits_{i}1}.
\end{eqnarray}
Here $NS_I$ is the set of neighbours of node $I$. As neighbourhood between two nodes is based upon the interaction between them so for non neighbour nodes the weight will be $1$. Using this fact,
\begin{equation}
\label{algo2eqn}
Rep_{I,i}= \frac{\sum\limits_{i\in NS_I}(w_{Ii}-1)\times t_{ij}+\sum\limits_{i} t_{ij}}{\sum\limits_{i\in NS_I}(w_{Ii}-1)+\sum\limits_{i}1}.
\end{equation}

In order to compute the globally calibrated local reputation of node $j$, each node will need the reputation of $j$ as estimated by neighbours on the basis of the direct interaction with node $j$. Whereas in the gossip algorithm, after every step, the value at the node keeps on changing as it gets added to the values pushed by other nodes and is distributed after division to neighbours. After few steps, the values converge when incoming and outgoing values statistically balance each other. Hence after the first step of gossiping it is difficult to get the value of $t_{ij}$ from a neighbour for the nodes with whom it has direct interaction, by gossiping process. 

If a node is participating in the process of gossip about node $j$ for the first time  or reputation of node $j$ at this node has changed considerably since start of previous round of gossip, this node will inform the reputation of node $j$ to all of its neighbours before the start of next gossiping round (figure \ref{fig:gclr}). This will be done by all the nodes. After this process, every node has opinion of its neighbours about node $j$ (If a node does not inform reputation of $j$, the already available earlier value will be considered). If node will not hear from a node for a long time, it will assume that this node is no longer present and hence it will drop its feedback after some time.
 
Now these reputations will be multiplied by ($W_{Ii}-1$) as required in equation (\ref{algo2eqn}) and summed up as value $\hat{y}_{Ij}$ (see algorithm 2). Now normal gossip will be done as in algorithm~\ref{alg2} with a difference that only one node will be given gossip weight $1$ and rest will be given $0$ gossip weight The nodes that have reputation information about node under consideration (i.e., $j$) will also push $count =1$. After stabilisation of gossip, each node will have $\frac{(N_d)}{N}$, where $N_d$ is number of nodes having direct interaction and $N$ is total number of nodes, sum of total values/total number of nodes, and $\frac{1}{N}$ as gossip weight. This will lead to the summation of all reputation values available and total number of nodes giving these reputation values. Now reputation can be calculated using (\ref{algo2eqn})[algorithm~\ref{alg2}]. 
 
 \begin{figure}[!t]
 \begin{center}
 \includegraphics[width=80mm,height=46mm, keepaspectratio=false]{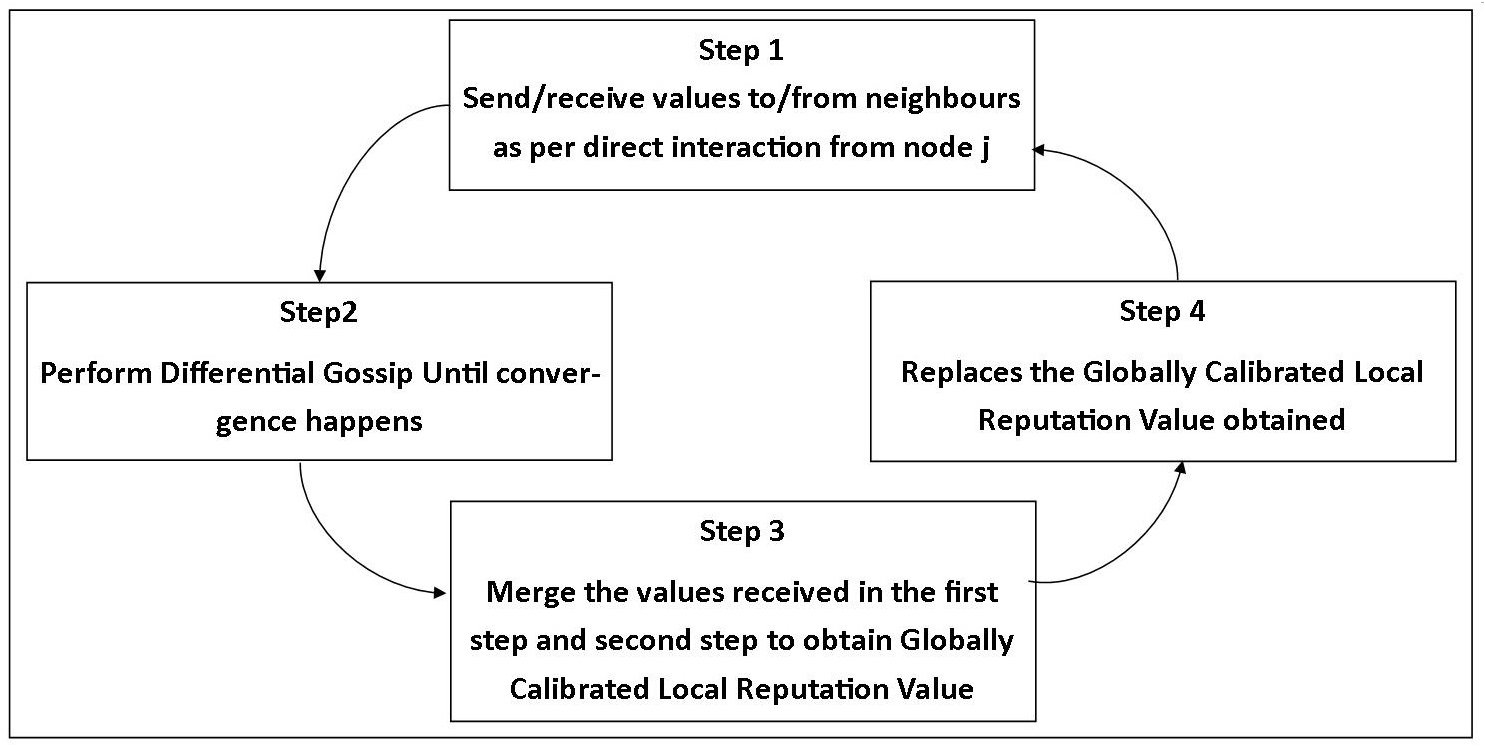}
 \caption{Sequence of computation for estimating globally calibrated local reputation at each node}
 \label{fig:gclr}
 \end{center}
 \end{figure}

\begin{algorithm}
\caption{Globally calibrated local Reputation aggregation for a single node}
\label{alg2}
\begin{algorithmic}

\REQUIRE Feedback matrix ${\bf t}$, gossip error tolerance $\xiup$. 
\ENSURE Globally calibrated local Reputation of node $j$ ($Rep_{ij}$)
\STATE Assume weight $g_{1}=1$
\STATE Node $i$ do
\IF {$i\ne 1$}
\STATE $g_{ij}=0$
\ENDIF
\IF {$i$ has some reputation value about $j$}
\STATE take $count_{ij}=1$, and $y_{ij}=t_{ij}$
\ELSE
\STATE take $count_{ij}=0$, and $y_{ij}=0$
\ENDIF
\STATE calculate $w_{j}$ for all neighbours of $i$ by formula
$w_{ij}=a_i^{b_{ij}t_{ij}}$
\IF {Node $i$ is participating first time in gossiping process}
\STATE Push feedback due to direct interaction about the node under consideration to all neighbours
\ELSE
\IF {Feedback about the node under consideration has changed by more than some constant $\Delta$ }
\STATE Push the new feedback to all the neighbours
\ENDIF
\ENDIF
\STATE Push self degree $d_{i}$ to neighbouring nodes
\STATE Calculate $\hat{y}_{ij} = \sum\limits_{k=1}^{d_i} ( w_k -1 ) \times feedback\; from\; node\; k\; about\; j$

\bf{Algorithm Continued...}
\end{algorithmic}
 \end{algorithm}

 \begin{algorithm}
\renewcommand{\thealgorithm}{}
 \nonumber\begin{algorithmic}
 \caption{Algorithm 2 (continued)}
 \STATE Take the average of neighbours' degree
 \STATE Calculate the ratio of its degree and average of neighbour degree ($k_{i}\shortleftarrow
 \frac{degree\; of\; i}{average\; neighbour\; degree}$)
 \STATE Round off $k_{i}$ for  $k_{i}\geq 1$ else take $k_{i}=1$
 \STATE End do
 \STATE $m\shortleftarrow 1$ 
 \COMMENT {Initialise Gossip Step}\\
 \STATE $u\shortleftarrow \frac{y_{ij}}{g_{ij}}$ for nodes having $g_{ij}\ne 0$; otherwise $u\shortleftarrow 10$.
\REPEAT
\STATE Do for node $i$ 
\STATE $(y_{sj},g_{sj},count_{sj})$ are all 3-tuples of gossip value of reputation, gossip weight and count received by node $i$ in the previous step

\STATE $y_{ij}\shortleftarrow \sum\limits_{s\in S}y_{sj};\;g_{ij}\shortleftarrow
\sum\limits_{s\in S}g_{sj};\;count_{ij}\shortleftarrow \sum\limits_{s\in S}count_{sj}$
\COMMENT{update gossip pairs}\\
\COMMENT{S is the set of nodes sending the gossip to $i$}
\STATE choose $k_{i}$ random nodes in its neighbourhood
\STATE send gossip pair ($\frac{1}{k_{i}+1}y_{ij},\frac{1}{k_{i}+1}g_{ij})$ to all $k_{i}$ nodes and itself
\STATE $m\shortleftarrow m+1$
\COMMENT {increment the gossip step}
\IF {$|S|> 1$}

\IF {$|\frac{y_{ij}}{g_{ij}}-u|\leq \xiup$}
\STATE {Inform all neighbours about self convergence}
\ENDIF
\ENDIF 
\STATE \STATE $u\shortleftarrow \frac{y_{ij}}{g_{ij}}$ for nodes having $g_{ij}\ne 0$, otherwise $u\shortleftarrow 10$.
\UNTIL
 {Self convergence and all neighbours' convergence has happened}
%
%
\STATE $ {\bf output}Rep_{ij}\shortleftarrow \frac{\hat y_{ij} +\frac{y_{ij}}{g_{ij}}}{\sum (w_{k}-1)+\frac{count_{ij}}{g_{ij}}}$
\end{algorithmic}
\end{algorithm}

In third variation we want to aggregate the global reputation of all nodes
simultaneously. This algorithm is quite similar to algorithm~\ref{alg1} except few
changes. Unlike algorithm~\ref{alg1}, node will push complete vector $\bf{y_i}$ which consists of feedback from node about all the other nodes it has transacted with. Similarly, instead of single gossip weight $g_{ij}$, node will send vector $\bf{  g_i}$. A node id will also be attached with every pair of $y_{ij}$ and $g_{ij}$ so that receiving node can distinguish among  gossip pairs. So, in fact, node pushes gossip trio consisting of $y_{ij}$, $g_{ij}$ and node id $j$. The convergence of algorithm is checked by the following condition:
\begin{equation}
\sum\limits_{j}\bigg|\frac{y_{ij}(n)}{g_{ij}(n)}  -
\frac{y_{ij}(n-1)}{g_{ij}(n-1)}\bigg|   \leq N \xiup
\end{equation}

Where n is the time instant and $\xiup$ is the permissible error bound.

In the fourth variation we want to aggregate the globally calibrated local reputation of all the nodes simultaneously.  This algorithm is quite similar to second variation expect that we will use the third variation for gossiping process. Moreover in this variation nodes will push full vector $\bf{t_i}$, in place of only $t_{ij}$ for node $j$. 

It can be noted here that the time complexity of all four variations of algorithm will be of the same order because reputations of all the nodes will be pushed simultaneously as a vector. Whereas the communication complexity in third and fourth variation will increase proportionally to the size of vector, as now the reputation aggregation is happening for these many nodes. 
\subsection{Example for Differential Gossip Algorithm}
We will consider a network of 10 nodes and observe the aggregation in this network. Figure \ref{examplenet} shows the topology of the network. Table \ref{examplenet} shows the aggregated value after every iteration at each node. 
\begin{figure}[!t]
 \begin{center}
 \includegraphics[width=55mm,height=50mm, keepaspectratio=false]{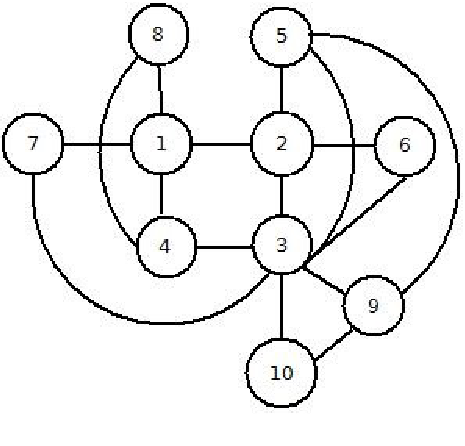}
 \caption{Topology of the example network}
 \label{examplenet}
 \end{center}
 \end{figure}
\begin{table*}[t]
\centering
\begin{tabular}{ |c|c| c| c| c|c|c|c|c|c|c|}
\hline
Node & 1& 2 & 3 & 4 & 5 & 6 & 7 & 8 & 9 & 10\\
\hline
degree & 4 & 4 & 7& 3 & 3 & 2 & 2 & 2 & 3 & 2\\
\hline
k      & 1 & 1 & 3 & 1 & 1 & 1 & 1 & 1 & 1 & 1 \\ 
itr=1  & 0.5653  & 0.3091  & 0.3629 & 0.4765 & 0.3080 & 0.6433 & 0.0668 & 0.6257 & 0.4386 & 0.7015\\
itr=2  & 0.4158  & 0.3091  & 0.4698 & 0.4765 & 0.3080 & 0.4923 & 0.2553 & 0.6257 & 0.4127 & 0.4860\\
itr=3  & 0.4158  & 0.3083  & 0.4073 & 0.5138 & 0.3080 & 0.4791 & 0.3898 & 0.5138& 0.4440 & 0.4324 \\
itr=4  & 0.3959  & 0.4128  & 0.4211 & 0.5138 & 0.3850 & 0.4791 & 0.3898 & 0.5138 & 0.4330 & 0.4324 \\ 
itr=5  & 0.4111  & 0.4301  & 0.4355 & 0.4222 & 0.3976 & 0.4388 & 0.3898 & 0.5138 & 0.4170 & 0.4329 \\
itr=6  & 0.4111  & 0.4301  & 0.4195 & 0.4287 & 0.4218 & 0.4388 & 0.3898 & 0.4287 & 0.4245 & 0.4338 \\
itr=7  & 0.4156  & 0.4246  & 0.4233 & 0.4248 & 0.4228 & 0.4388 & 0.4257 & 0.4287 & 0.4283 & 0.4283 \\
itr=8  & 0.4198  & 0.4244  & 0.4233 & 0.4187 & 0.4232 & 0.4262 & 0.4213 & 0.4269 & 0.4283 & 0.4270 \\
\hline
\end{tabular}
\caption{Aggregated value after every iteration at each node}
\label{tab:2}
\end{table*}

\section {Analysis of Algorithm}
\subsection{Analysis of convergence of Gossip Algorithm}
In this section we will study the time needed by nodes to converge to the average of local direct estimate values at different nodes. First, we will study the spreading of gossip in power law network. Then, we will study the diffusion speed of gossip. Then, we will study the diffusion speed of gossip. We have taken power law network as most actual P2P network tends to follow power law degree distribution. Based on these results, we will find the time of convergence.

Chierichetti \emph{et.al.} \cite{Flavio} proved that in PA based graph, $\{G_{N}^{m}\}$ for $m\geq 2$ push or pull alone will fail in spreading the gossip. They also proved that push-pull will succeed in $O((log_{2}N)^{2}$) time steps where low degree nodes push information to low degree nodes and power nodes and pull information from power node. 

But in a peer-to-peer networks, it's difficult to identify the power nodes. Moreover pulling the information is more expensive than pushing the information. So we have proposed to use differential push gossip in place of push pull gossip. In the following theorem we have proved that the differential push gossip will take same time as the push pull gossip.

\begin{theorem}
\label{spr}
Gossip will spread with high probability in a PA based graph, $\{G_{N}^{m}\}$ for $m\geq 2$, within $O((log_{2}N)^{2}$) time using differential-push (by high probability we mean, $1-o(1)$, where $o(1)$ goes to zero as N increases).
\end{theorem}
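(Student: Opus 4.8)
The plan is to reduce the claim to the push-pull result of Chierichetti \emph{et al.}~\cite{Flavio} by showing that differential-push reproduces each of the three information-transfer moves that make push-pull succeed on $\{G_N^m\}$, with matching per-round success probabilities. First I would recall the structure used in~\cite{Flavio}: the vertices split into power nodes of degree $\Theta(N^\epsilon)$ and low-degree nodes of degree $\Theta(\log_2 N)$, the low-degree nodes form a connected subgraph of diameter $\Theta(\log_2 N)$, and every power node's neighbourhood is dominated by low-degree nodes so that its average neighbour degree is $\bar d=\Theta(\log_2 N)$. The push-pull strategy of~\cite{Flavio} spreads a rumour in $O((\log_2 N)^2)$ rounds using exactly three moves: (i) low-to-low transfer by push, (ii) low-to-power transfer by push, and (iii) power-to-low transfer by pull. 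Moves (i) and (ii) are literally push steps, and since a low-degree node has $k=1$ under the differential rule it still pushes to a single random neighbour, so these two moves are present unchanged in differential-push. The content of the proof is therefore to replace the pull of move (iii) by the extra pushes a power node makes under the differential rule, and to show the substitution costs only a constant factor in the round count.

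For the key computation, fix a power node $v$ holding the rumour and a low-degree neighbour $u$ of $v$. Under the differential rule $v$ pushes to $k_v$ neighbours chosen uniformly without replacement, where $k_v$ is $d_v/\bar d$ rounded to the nearest integer; since $d_v=\Theta(N^\epsilon)$ and $\bar d=\Theta(\log_2 N)$ we get $k_v=\Theta(N^\epsilon/\log_2 N)\ge 1$. Hence
\begin{equation}
\label{pushprob}
\Pr[u \text{ receives from } v] = \frac{k_v}{d_v} = \Theta\!\left(\frac{1}{\log_2 N}\right).
\end{equation}
In push-pull the same transfer occurs only when $u$ pulls from $v$, i.e.\ with probability $1/d_u=\Theta(1/\log_2 N)$, and the reverse move (ii) happens in both algorithms when $u$ pushes to its power neighbour, with probability $1/d_u$. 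Thus the per-round probability of each elementary transfer agrees, up to a constant, across the two processes.

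With these matching probabilities I would make the reduction precise by a coupling. I would run differential-push and the push-pull process of~\cite{Flavio} on the same graph and the same source, coupling their random choices edge by edge so that every successful transfer in push-pull is matched by a transfer of the same edge in differential-push, within the same round for moves (i) and (ii) and within a constant-factor-longer geometric waiting time for move (iii) by comparing~\eqref{pushprob} with the pull probability. Because moves (i)--(iii) are precisely the ones~\cite{Flavio} use to drive the rumour across the low-degree subgraph and in and out of the power nodes, the coupled differential-push process informs every vertex within a constant multiple of the push-pull time, namely $O((\log_2 N)^2)$ rounds. The high-probability guarantee $1-o(1)$ then follows by a union bound over the $O(N)$ vertices of the events that each transfer along the relevant shortest paths falls inside its allotted geometric window, exactly as in~\cite{Flavio}; the negative correlation among the $k_v$ distinct targets of a single power push only sharpens these bounds.

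The hard part will be making the coupling of move (iii) airtight despite the structural mismatch between a pull, initiated by the receiver and hitting one edge, and a multi-push, initiated by the sender and hitting $k_v$ distinct edges at once. In particular I would need to control the rounding in $k_v$ and the boundary case $k_v\to 1$, which ensures that low-degree nodes genuinely behave as single-push senders, and to verify that $\bar d=\Theta(\log_2 N)$ holds simultaneously for \emph{every} power node with probability $1-o(1)$, since a few atypical power nodes whose neighbourhoods contain other power nodes could inflate $\bar d$, depress $k_v$, and slow the corresponding transfer below the $\Theta(1/\log_2 N)$ rate. Once these estimates on the degree sequence of $\{G_N^m\}$ are in hand, the geometric-waiting-time coupling and the union bound deliver the stated $O((\log_2 N)^2)$ bound.
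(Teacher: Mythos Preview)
Your proposal is correct and follows essentially the same approach as the paper: both reduce to the analysis of Chierichetti \emph{et al.}~\cite{Flavio} by arguing that the multi-push from a power node replaces the missing pull, the key point being that a power node's differential push hits any fixed low-degree neighbour with probability $k_v/d_v=\Theta(1/\log_2 N)$, matching the pull rate. The paper's own proof is in fact a much terser sketch than yours---it recalls the $H/W/V$ decomposition, asserts that push starting from $V$ or $W$ already succeeds in $O((\log_2 N)^2)$ rounds, and then simply states that the $k$-fold push from nodes in $H$ removes the remaining bottleneck, without the explicit probability computation, the coupling argument, or the caveats about $\bar d$ and the rounding of $k_v$ that you carefully flag.
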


\begin{proof}
Differential push means that every node will push its data to different number ($k_{i}$) of nodes instead of one node. Here $k_{i}$ is the ratio of node's degree and average degree of all its neighbours.



$\{G_{N}^{m}\}$ for $m\geq 2$ can be thought as the union of few finite connected components. These components will consist of many low degree nodes with average degree $log_2 N$ and few high degree power nodes. These components will have diameter as $log_2 N$. 

In a network (component) with average degree $log_2 N$ and diameter $log_2 N$, the gossip will spread in within $O((log_{2}N)^{2})$ steps using normal push \cite{randomizedbroadcast}. But, these components also have power nodes that may lead to large spreading time in that component \cite{Flavio}. As power nodes will have high degree and making one push at a time will take longer in transferring the information to all of its neighbours. Therefore if some low degree node is connected only to power nodes, the information transfer to that node will take much longer. Differential push will solve this problem because now power nodes will be making multiple pushes as per the ratio of degrees of neighbour nodes and degree of the node itself. 

Hence, in every component, gossip spreads within $O((log_{2}N)^{2})$ steps using differential push. As there are finite such components, gossip will spread in complete network within $O((log_{2}N)^{2})$ steps.






\end{proof}
In our case, few nodes have information (reputation of a node) that has to be averaged and this average has to be spread to all the nodes. So we will prove the convergence for the case where every node has information which has to be averaged.

Let at $n=0$, each node has a number. For $jth$ node this number is $d_{0,j}$. So the objective of gossip is to have $\frac{\sum\limits_{j}d_{0,j}}{N}=d_{avg}$ at each node after some rounds of gossip. The number of rounds needed should be least possible. For $n=0$, the gossip weight at each node will be unity. After $n$ steps, let the node $j$ have the evolved number as $d_{n,j}$ and evolved gossip weight as $g_{n,j}$.

To study the time taken in the convergence of algorithm we assume that each node $j$ maintains a vector $\bf{c_{j}^{m}}$. The dimension of this vector will be $1\times N$. This vector will record the contribution received from every node including itself about the node $m$. So initially at $n=0$ each node will have a vector in which $(N-1)$ elements will be zero and one element, the one for itself, will be unity. If in the process of first step of  gossiping, only the node $i$ chooses node $j$ for pushing the gossip about $m$, then the contribution by $i$ to $j$ i.e. $c_{n,i,j}^{m}$ will be recorded in the $i^{th}$ element of contribution vector of node $j$. So after first step of gossip the contribution vector of $j$ will contain two non-zero elements one received from $i$ and one pushed to itself. We assume here that only push has been received by node $j$. In case of $l$ pushes being received the vector will have $l+1$ non zero entries.  Now node $j$  will choose some node $o$. Node $j$ will push the complete contribution vector divided by $p+1$(if $p$ push gossip is under consideration) to node $o$. Now node $o$ will do vector addition of all the received vectors including the one received from itself. The resultant vector will be the new contribution vector. This process will be repeated at all the nodes. 

So it can be said that $d_{n,j}^{m}=\sum\nolimits_{i} c_{n,i,j}^{m}\cdot d_{0,i}^{m}$ such that $g_{n,j}^{m}=\sum\nolimits_{i}c_{n,i,j}^{m}$. When a node will receive same amount of contribution from all nodes, at that time the ratio of evolved number ($d_{n,j}^{m}$) and evolved gossip weight ($g_{n,j}^{m}$) will be the average of all the numbers.

\begin{theorem}
\label{diff}
Uniform Gossip diffuses with differential push in PA based graph within $O((log_{2}N)^{2}+log_{2} \frac{1}{\xiup})$ time with high probability such that contributions at all nodes will be $\xiup$ uniform after this amount of time, i.e. $max_{i}|\frac{c_{n,i,j}^{m}}{||\bf c_{n,j}^{m}||_{1}}-\frac{1}{N}|\leq \xiup$ $\forall j$ where $||\bf c_{n,j}||_{1}=\sum\nolimits_{i}c_{n,i,j}$.
\end{theorem}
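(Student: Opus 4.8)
The plan is to split the analysis into two phases matching the two summands of the bound: a \emph{spreading} phase costing $O((\log_2 N)^2)$ and an \emph{equalization} phase costing $O(\log_2\frac1{\xiup})$. First I would record the recursion obeyed by the contribution vectors (dropping the fixed superscript $m$ for readability). Writing $C_n$ for the matrix whose $(i,j)$ entry is $c_{n,i,j}$, each differential-push round acts as $C_{n+1}=C_n P_n^{T}$, where $P_n$ is the random matrix that sends a $\frac{1}{k_r+1}$ share of node $r$'s mass to each of its $k_r$ chosen neighbours and to $r$ itself; thus $P_n$ is column-stochastic and $P_n^{T}$ is row-stochastic. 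Since $C_0=I$, the product $C_n$ stays row-stochastic, which reproduces the conservation law $\sum_j c_{n,i,j}=1$ already used in the text. The target condition $\frac{c_{n,i,j}}{\|\mathbf c_{n,j}\|_1}\approx\frac1N$ for all $i$ is exactly the statement that the columns of $C_n$ have become uniform, equivalently that all rows of $C_n$ have become (nearly) equal.

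For the spreading phase I would invoke Theorem~\ref{spr} directly: applying it to the unit mass originating at each source $i$, within $T_1=O((\log_2 N)^2)$ time every node $j$ has, with high probability, received a strictly positive contribution from every $i$, so that $C_{T_1}$ has all entries positive. What I actually need is a small strengthening — not merely positivity but that after spreading each normalized column is bounded away from being degenerate — and I would extract this from the same diffusion estimate underlying Theorem~\ref{spr}, taking a union bound over the $N$ sources (the $1-o(1)$ guarantee survives the union bound because the per-source failure probability is $o(1/N)$, or can be driven there by absorbing constants into the $O((\log_2 N)^2)$ time budget).

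For the equalization phase I would set up a contraction argument on the potential
\[
\Phi(n)=\max_{i,j}\Big|\frac{c_{n,i,j}}{g_{n,j}}-\frac1N\Big|,\qquad g_{n,j}=\|\mathbf c_{n,j}\|_1,
\]
and show that each gossip round (or each constant-length block of rounds) multiplies $\Phi$ by a factor $\rho<1$ that does not depend on $N$. The differential-push normalization is precisely what makes this plausible: because every node pushes in proportion to its degree, the expected one-step averaging matrix behaves like a reversible chain with a constant ergodic (Dobrushin) coefficient on the now spread-out support, so one averaging step pulls each normalized column a constant fraction of the way toward uniform. Iterating the contraction from the state at time $T_1$, after $T_2=O(\log_2\frac1{\xiup})$ further rounds we reach $\Phi\le\xiup$, and summing the two phases yields the claimed $O((\log_2 N)^2+\log_2\frac1{\xiup})$ bound with high probability.

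I expect the genuine obstacle to be establishing that the per-round contraction factor $\rho$ is bounded away from $1$ \emph{uniformly in $N$} on a preferential-attachment graph. Plain push or pull on $\{G_N^m\}$ admits no such constant gap — this is exactly the phenomenon Chierichetti \emph{et.\,al.} \cite{Flavio} exploit — so the whole weight of the argument rests on showing that the degree-proportional ($k_i$) pushing of the differential scheme compensates for the heavy-tailed degrees and restores constant-rate mixing once information has spread. Making that rigorous, rather than merely appealing to the analogy with push-pull used in Theorem~\ref{spr}, is where the real work lies; a cleaner route may be to couple the equalization phase directly with the push-pull diffusion of \cite{Flavio} and transfer its convergence guarantee, instead of bounding a spectral or ergodic coefficient on the PA graph from scratch.
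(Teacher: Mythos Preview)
Your two-phase decomposition (spreading via Theorem~\ref{spr}, then equalization) matches the overall architecture of the paper's argument, and your final union-bound step is the same. But the equalization analysis differs in a way that matters. You work with the $\ell^\infty$ potential $\Phi(n)=\max_{i,j}\bigl|\tfrac{c_{n,i,j}}{g_{n,j}}-\tfrac1N\bigr|$ and aim for a per-round Dobrushin-type contraction with a factor $\rho<1$ uniform in $N$; you then (correctly) flag that establishing such a uniform $\rho$ on a PA graph is the real obstacle and leave it unresolved. That is the genuine gap in your proposal: a constant $\ell^\infty$ contraction for random averaging on $\{G_N^m\}$ is not something you can read off Theorem~\ref{spr} or \cite{Flavio}, and proving it from scratch would be at least as hard as the theorem itself.

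The paper sidesteps exactly this difficulty by choosing a different potential. It tracks the $\ell^2$ quantity
\[
\psi_n=\sum_{i,j}\Bigl(c_{n,i,j}-\frac{g_{n,j}}{N}\Bigr)^2,
\]
writes the one-step recursion for $c_{n+1,i,j}$ under $p$-push, expands the square, and uses mass conservation (Proposition~\ref{massconv}) to kill the cross terms. This yields the explicit bound $E[\psi_{n+1}\mid\psi_n]\le\frac{1}{p+1}\psi_n+\mathrm{const}$ directly, with no appeal to a spectral gap or ergodic coefficient; taking $p=1$ as the worst case and iterating from $\psi_0=N-1$ gives the $\log_2 N+\log_2\frac{1}{\xiup}$ decay. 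Only after $\psi_n\le\xiup$ is established does the paper pass to the per-entry statement and combine with the $(\log_2 N)^2$ spreading bound. The key idea you are missing, then, is that the quadratic potential makes the contraction a computation rather than a mixing-time estimate; your $\ell^\infty$ route is not wrong in principle, but it leaves precisely the hard step open.
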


On the basis of these two theorems, it can be seen (as in \cite{Kempe}), that with high probability relative error in average estimation and sum estimation (if only one node is given weight one and others are given zero) will be bounded by $\xiup$ after $O((log_{2}N)^{2}+log_{2}\frac{1}{\xiup})$ gossip steps. 
\subsection{Analysis of Collusion}
In our proposed system a node may get trust values about a node by three possible ways, first by direct interaction, second from neighbours and third by gossiping. First mechanism can not be affected by collusion. We are assuming that second mechanism will also not be affected by collusion as neighbours have a definite level of trust for each other. We are considering the collusion because of third mechanism.

For analysis of collusion, we will calculate the difference of real reputation  and estimated reputation of a node $j$ by some node $o$ in the presence of collusion using our proposed method, we will compare it with the method proposed in \cite{gossiptrust}. Lets us assume that the network is formed by the member nodes of set $\bf{\mathbb{N}}$. There is a subset $\bf{\mathbb{C}}$ of set $\bf{\mathbb{N}}$ such that member nodes of set $\bf{\mathbb{C}}$ are involved in collusion. The cardinality of sets $\bf{\mathbb{N}}$ and $\bf{\mathbb{N}}$ are assumed to be $N$ and $C$ respectively. We also assume that members nodes of set $\bf{\mathbb{C}}$ are colluding in groups with a group size of $G$. By colluding in a group we mean that if some node is the member of that group then group members of colluding group will report its reputation as $1$. Whereas for others nodes they will report the reputation value as $0$. Let us say that real reputation of a node $j$ is $R_{real}^{j}$, and estimated reputation is $R_{estc}^{j}$, if $j$ is a colluding node, $R_{estnc}^{j}$, if $j$ is not a colluding node.
\begin{equation}
R_{real}^{j}=\frac{\sum\limits_{i\in \bf{\mathbb{N}}}t_{ij}}{N}.
\end{equation}
Here $t_{ij}$ is normalised trust value of the node $x$ at node $i$.\\
If $j$ is not a colluding node then,
\begin{equation}
R_{estnc}^{j}=\frac{\sum\limits_{i\in \bf{\mathbb{N}}\setminus\bf{\mathbb{C}}}t_{ij}}{N}.
\end{equation}
If $j$ is a colluding node, then
\begin{equation}
R_{estc}^{j}=\frac{\sum\limits_{i\in \bf{\mathbb{N}}\setminus\bf{\mathbb{C}}}t_{ij}+G}{N}.
\end{equation}
So the expected value of reputation estimate ($E[R_{est}^{j}]$) will be
\begin{eqnarray}
\nonumber E[R_{est}^{j}]&=& \frac{C}{N}\left(\frac{\sum\limits_{i\in \bf{\mathbb{N}}\setminus\bf{\mathbb{C}}}t_{ij}+G}{N}\right)+ \left(1-\frac{C}{N}\right)\left(\frac{\sum\limits_{i\in \bf{\mathbb{N}}\setminus\bf{\mathbb{C}}}t_{ij}}{N}\right)\\
&=& \frac{GC}{N^{2}}+\frac{\sum\limits_{i\in \bf{\mathbb{N}}\setminus\bf{\mathbb{C}}}t_{ij}}{N}
\end{eqnarray}
So difference in real reputation and expected value of estimated reputation by node $o$ for node $j$ ($\Delta R_{old}^{oj}$) will be,
\begin{equation}
\Delta R_{old}^{oj}=-\frac{GC}{N^{2}}+\frac{\sum\limits_{i\in \bf{\mathbb{C}}}t_{ij}}{N}.
\end{equation}
Now, we incorporate the trust based weighted opinion of neighbours. Lets us assume that $w_{oi}$ is the weight given to the opinion of node $i$ by node $o$. It may be noted $w_{oi}\geq 1, \forall i\; (equation ~\ref{weight})$. So the real reputation of node $j$ for node $o$ will be,
\begin{equation}
R_{real}^{j}=\frac{\sum\limits_{i\in \bf{\mathbb{N}}}t_{ij}+\sum\limits_{i\in \bf{\mathbb{N}}}(w_{oi}-1)t_{ij}}{N+\sum\limits_{i\in \bf{\mathbb{N}}}(w_{oi}-1)}.
\end{equation}
If $x$ is not a colluding node then,
\begin{equation}
R_{estnc}^{j}=\frac{\sum\limits_{i\in \bf{\mathbb{N}}\setminus\bf{\mathbb{C}}}t_{ij}+\sum\limits_{i\in \bf{\mathbb{N}}}(w_{oi}-1)t_{ij}}{N+\sum\limits_{{i\in \bf{\mathbb{N}}}}(w_{oi}-1)}.
\end{equation}
And if $j$ is a colluding node then,
\begin{equation}
R_{estc}^{j}=\frac{\sum\limits_{i\in \bf{\mathbb{N}}\setminus\bf{\mathbb{C}}}t_{ij}+\sum\limits_{i\in \bf{\mathbb{N}}}(w_{oi}-1)t_{ij}+G}{N+\sum\limits_{i\in \bf{\mathbb{N}}}(w_{oi}-1)}.
\end{equation}
So the expected value of reputation estimate ($E[R_{est}^{j}]$) will be
\begin{eqnarray}
\nonumber E[R_{est}^{j}]&=& \frac{C}{N}\left(\frac{\sum\limits_{i\in \bf{\mathbb{N}}\setminus\bf{\mathbb{C}}}t_{ij}+\sum\limits_{i\in \bf{\mathbb{N}}}(w_{oi}-1)t_{ij}+G}{N+\sum\limits_{i\in \bf{\mathbb{N}}}(w_{oi}-1)}\right)\\\nonumber &&+ \left(1-\frac{C}{N}\right)\left(\frac{\sum\limits_{i\in \bf{\mathbb{N}}\setminus\bf{\mathbb{C}}}t_{ij}+\sum\limits_{i\in \bf{\mathbb{N}}}(w_{oi}-1)t_{ij}}{N+\sum\limits_{i\in \bf{\mathbb{N}}}(w_{oi}-1)}\right)\\
\nonumber&=&\frac{GC}{N\cdot (N+\sum\limits_{i\in \bf{\mathbb{N}}}(w_{oi}-1))}\\&&+\left(\frac{\sum\limits_{i\in \bf{\mathbb{N}}\setminus\bf{\mathbb{C}}}t_{ij}+\sum\limits_{i\in \bf{\mathbb{N}}}(w_{oi}-1)t_{ij}}{N+\sum\limits_{i\in \bf{\mathbb{N}}}(w_{oi}-1)}\right).
\end{eqnarray}
So difference in real reputation and expected value of estimated reputation by node $o$ for node $j$ ($\Delta R_{new}^{oj}$)will be  (equation 16 - equation 13),
\begin{eqnarray}
\nonumber\Delta R_{new}^{oj}&=&-\frac{GC}{N\cdot (N+\sum\limits_{i\in \bf{\mathbb{N}}}(w_{oi}-1))}+\frac{\sum\limits_{i\in \bf{\mathbb{C}}}t_{ij}}{N+\sum\limits_{i\in \bf{\mathbb{N}}}(w_{oi}-1)}\\
\nonumber&=&\frac{N}{(N+\sum\limits_{i\in \bf{\mathbb{N}}}(w_{oi}-1))}\cdot \left(-\frac{GC}{N^{2}}+\frac{\sum\limits_{i\in \bf{\mathbb{C}}}t_{ij}}{N}\right)\\
&=&\frac{N}{(N+\sum\limits_{i\in \bf{\mathbb{N}}}(w_{oi}-1))}\cdot \Delta R_{old}^{oj}
\end{eqnarray}

\subsection{Numerical Results}
Performance of algorithm for reputation aggregation for peer to peer file sharing system is also evaluated by simulation as well. 
The simulation experiments has been conducted for 100 to 50000 nodes. A power law network has been built using
\emph{Preferential Attachment} model. Performance of differential algorithm has been evaluated in terms of number of iterations (to assess the rate of convergence) required to converge within a certain aggregation error. Number of packets per node per gossip step that are required to be transmitted for convergence have also been calculated to assess the network overhead. Algorithm has also been tested against collusion. 

Figure ~\ref{fig:gossipwithoutloss} shows the number of gossip steps required for different error bounds for different number of nodes. This is clearly evident that number of gossip steps is increasing with a rate much less than normal push gossip.

 Peer to peer networks operate above TCP layer, i.e. these kind of networks assume a reliable bit pipe between sender and receiver. So peer to peer network suffers by packet loss only when some node leaves the network i.e. due to churning.Figure ~\ref{fig:N=10000packetloss} shows the required number of gossip steps with different packet loss probability for 10000 nodes. Here the assumption is when a node leaves during gossip process, it hands over the gossip pair vectors to some other node so mass conservation still applies. Whenever a node pushes gossip pair to this absent node, the pushing node doesn't receive any acknowledgement. In such cases pushing node pushes the gossip pair to itself so that mass conservation still applies. 
 We can see a small increment in the number of gossip steps with the increase in the packet loss probability.

Figure ~\ref{fig:collgroup} and ~\ref{fig:collindi} shows the immunity of algorithm against collusion in terms of RMS error in case of individual (fig~\ref{fig:collindi}) and group collusion (fig~\ref{fig:collgroup}). 
Here average RMS error is defined as follows.
\begin{equation}
Average\; RMS\; error= \frac{1}{N}\sum\limits_{i}\sqrt{\frac{\sum\limits_{j}((r_{ij}-\hat{r}_{ij})/r_{ij})^{2}}{N}}
\end{equation}
Here $r_{ij}$ is the reputation of node $i$ at node $j$ computed by differential gossip in presence of colluding nodes, whereas $\hat{r_{ij}}$ is the computed reputation  if colluding nodes would not have been there. This is clearly evident that effect of collusion on reputation computation by differential gossip is quite less even with very high percentage of colluding users. The colluding group size is making a small difference in differential gossip reputation computation.

Table ~\ref{tab:1} shows the number of message transfers required by a node in one gossip step. It can be seen that this is decreasing slightly with the increase in number of nodes. This is happening because as number of gossip steps increases the overhead incurred in the beginning gets distributed and a node is less burdened as the number of total nodes increases. Similar thing happens when a lower value of $\xiup$ is chosen. This is also evident from Table ~\ref{tab:1} and fig~\ref{fig:gossipwithoutloss} that in case of differential gossip per step communication cost is more than normal push gossip but total communication cost for convergence is less for networks bigger than 1000 nodes moreover this differences increases substantially as network size increases. We have not verified it for normal pull gossip but intuitively it can be observed that same thing will be true.
\begin{figure}[!t]
\begin{center}
\includegraphics[width=80mm, height=70mm, keepaspectratio=false]{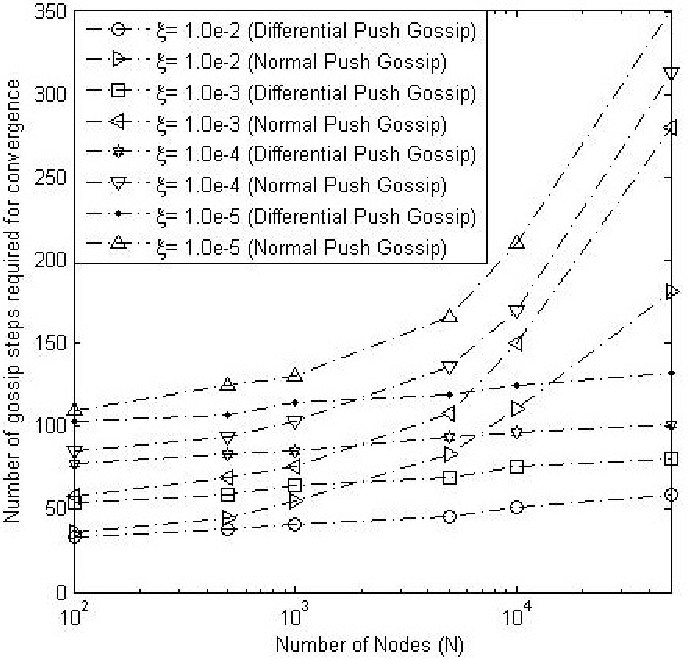}
\caption{Gossip step counts with different number of nodes(N) and different error bounds$\xiup$}
\label{fig:gossipwithoutloss}
\end{center}
\end{figure}

\begin{figure}[!t]
\begin{center}
\includegraphics[width=80mm, height=70mm, keepaspectratio=false]{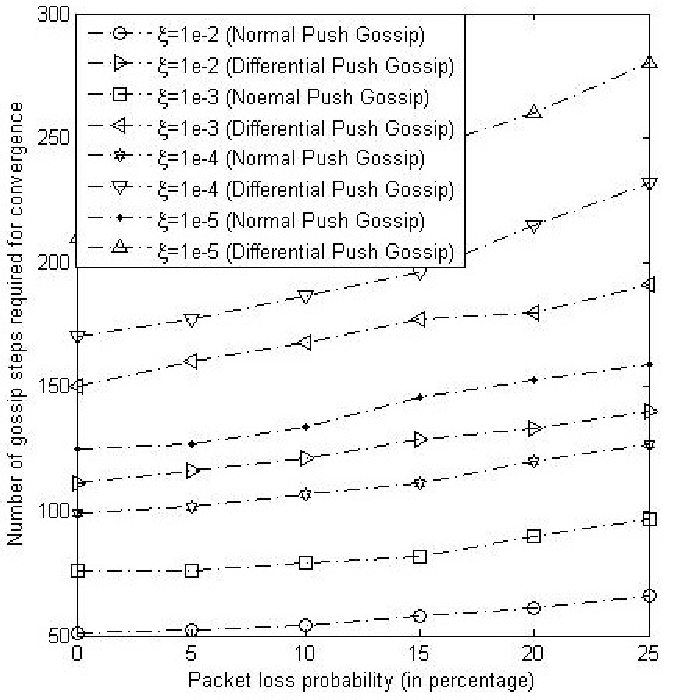}
\caption{Gossip step counts for N=10000 with different error bounds$\xiup$ for different packet loss probability}
\label{fig:N=10000packetloss}
\end{center}
\end{figure}


\begin{figure}[!t]
\begin{center}
\includegraphics[width=80mm,height=70mm, keepaspectratio=false]{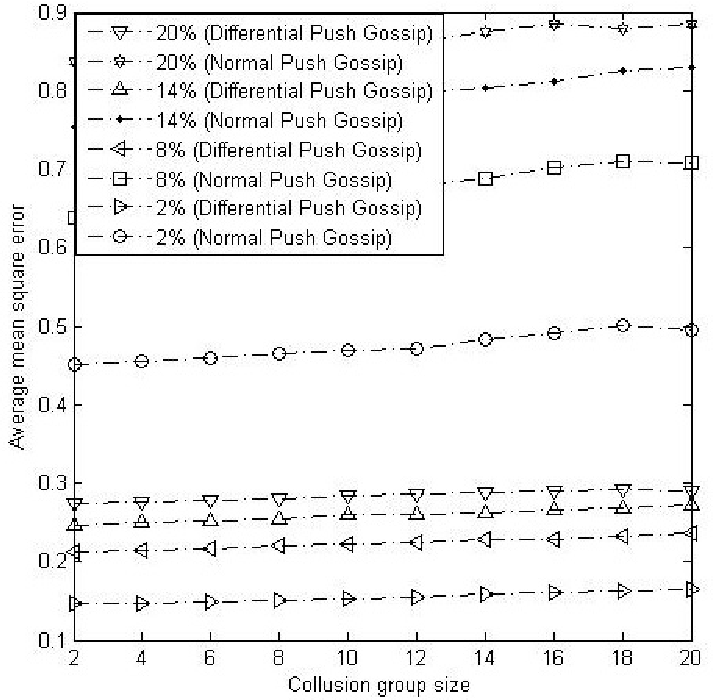}
\caption{Average RMS error with different size colluding groups for different percentage of colluding peers}
\label{fig:collgroup}
\end{center}
\end{figure}

\begin{figure}[!t]
\begin{center}
\includegraphics[width=80mm,height=75mm, keepaspectratio=false]{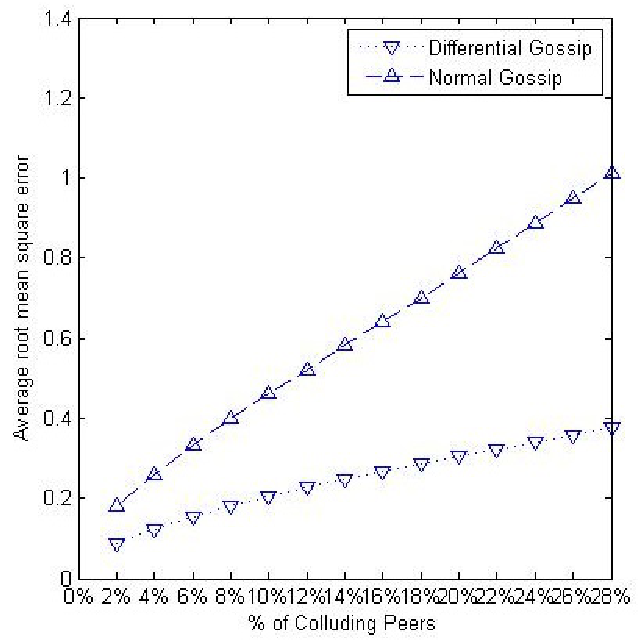}
\caption{Average RMS error with individual peers  for different percentage of colluding peers}
\label{fig:collindi}
\end{center}
\end{figure}
\begin{table}[t]
\centering
\begin{tabular}{ |c|c| c| c| c|}
\hline
 &$\xiup$=0.01 &$\xiup=0.001$ & $\xiup=0.0001$ & $\xiup=0.00001$ \\
\hline
\bf{N=100}    & 1.212    & 1.203   & 1.195 & 1.188  \\
\bf{N=500}    & 1.199    & 1.194   & 1.189 & 1.183  \\
\bf{N=1000}   & 1.178    & 1.159   & 1.157 & 1.148  \\
\bf{N=10000}  & 1.156    & 1.139   & 1.124 & 1.122\\
\bf{N=50000}  & 1.152    & 1.132   & 1.119 & 1.112\\
\hline
\end{tabular}
\caption{Number of messages per node per step transmitted due to gossiping}
\label{tab:1}
\end{table}
\section{Conclusion}
In peer-to-peer networks, free riding is a major problem that can be overcome by using reputation management system. A reputation management system includes two processes, first estimation of reputation and second aggregation of reputation. In this paper we have proposed an aggregation technique by modifying push gossip algorithm to differential push gossip algorithm. 

The proposed aggregation technique efficiently aggregates the trust values from different nodes in a power law network. This technique does not require the identification of power nodes. This makes algorithm easily implementable as identification of power nodes in a distributed setting is hard. This algorithm is also robust against churning as can be seen in figure~\ref{fig:N=10000packetloss}.
  Proposed technique aggregates the reputation in a differential manner. This is done by considering the feedback of trusted nodes with a higher weight. This leads to robustness against collusion as evident from figure~\ref{fig:collgroup}. 

Proposed algorithm has been presented to avoid the problem of free riding but it can also be used to avoid malicious users in the network just by changing the method of estimation of $a_i$ and $b_{ij}$.
\bibliography{ref}

\begin{thebibliography}{10}
\providecommand{\url}[1]{#1}
\csname url@samestyle\endcsname
\providecommand{\newblock}{\relax}
\providecommand{\bibinfo}[2]{#2}
\providecommand{\BIBentrySTDinterwordspacing}{\spaceskip=0pt\relax}
\providecommand{\BIBentryALTinterwordstretchfactor}{4}
\providecommand{\BIBentryALTinterwordspacing}{\spaceskip=\fontdimen2\font plus
\BIBentryALTinterwordstretchfactor\fontdimen3\font minus
  \fontdimen4\font\relax}
\providecommand{\BIBforeignlanguage}[2]{{%
\expandafter\ifx\csname l@#1\endcsname\relax
\typeout{** WARNING: IEEEtran.bst: No hyphenation pattern has been}%
\typeout{** loaded for the language `#1'. Using the pattern for}%
\typeout{** the default language instead.}%
\else
\language=\csname l@#1\endcsname
\fi
#2}}
\providecommand{\BIBdecl}{\relax}
\BIBdecl

\bibitem{Feldman:2005:OFB:1120717.1120723}
\BIBentryALTinterwordspacing
M.~Feldman and J.~Chuang, ``Overcoming free-riding behavior in peer-to-peer
  systems,'' \emph{SIGecom Exch.}, vol.~5, no.~4, pp. 41--50, Jul. 2005.
  [Online]. Available: \url{http://doi.acm.org/10.1145/1120717.1120723}
\BIBentrySTDinterwordspacing

\bibitem{10.1109/MIC.2009.33}
M.~Karakaya, I.~Korpeoglu, and O.~Ulusoy, ``Free riding in peer-to-peer
  networks,'' \emph{IEEE Internet Computing}, vol.~13, no.~2, pp. 92--98, 2009.

\bibitem{conflictinginterest}
\BIBentryALTinterwordspacing
B.~Yang, T.~Condie, S.~D. Kamvar, and H.~Garcia-Molina, ``Non-cooperation in
  competitive p2p networks.'' in \emph{ICDCS}.\hskip 1em plus 0.5em minus
  0.4em\relax IEEE Computer Society, 2005, pp. 91--100. [Online]. Available:
  \url{http://dblp.uni-trier.de/db/conf/icdcs/icdcs2005.html YangCKG05}
\BIBentrySTDinterwordspacing

\bibitem{osborne}
\BIBentryALTinterwordspacing
M.~J. Osborne, \emph{Introduction to Game Theory: International Edition}.\hskip
  1em plus 0.5em minus 0.4em\relax Oxford University Press, 2009. [Online].
  Available: \url{http://EconPapers.repec.org/RePEc:oxp:obooks:9780195322484}
\BIBentrySTDinterwordspacing

\bibitem{Tang}
\BIBentryALTinterwordspacing
Y.~Tang, H.~Wang, and W.~Dou, ``Trust based incentive in p2p network,'' in
  \emph{Proceedings of the E-Commerce Technology for Dynamic E-Business, IEEE
  International Conference}, ser. CEC-EAST '04.\hskip 1em plus 0.5em minus
  0.4em\relax Washington, DC, USA: IEEE Computer Society, 2004, pp. 302--305.
  [Online]. Available: \url{http://dx.doi.org/10.1109/CEC-EAST.2004.71}
\BIBentrySTDinterwordspacing

\bibitem{Adar00freeriding}
E.~Adar and B.~A. Huberman, ``Free riding on gnutella,'' \emph{First Monday},
  vol.~5, p. 2000, 2000.

\bibitem{Hughes2005}
\BIBentryALTinterwordspacing
D.~Hughes, G.~Coulson, and J.~Walkerdine, ``Free riding on gnutella revisited:
  The bell tolls?'' \emph{IEEE Distributed Systems Online}, vol.~6, no.~6, pp.
  1--, Jun. 2005. [Online]. Available:
  \url{http://dx.doi.org/10.1109/MDSO.2005.31}
\BIBentrySTDinterwordspacing

\bibitem{10.1109/TPDS.2008.60}
K.~Chen, K.~Hwang, and G.~Chen, ``Heuristic discovery of role-based trust
  chains in peer-to-peer networks,'' \emph{IEEE Transactions on Parallel and
  Distributed Systems}, vol.~20, no.~1, pp. 83--96, 2009.

\bibitem{1209003}
E.~Damiani, S.~De~Capitani Di~Vimercati, S.~Paraboschi, and P.~Samarati,
  ``Managing and sharing servants' reputations in p2p systems,''
  \emph{Knowledge and Data Engineering, IEEE Transactions on}, vol.~15, no.~4,
  pp. 840--854, 2003.

\bibitem{ebay}
``http://www.ebay.com.''

\bibitem{Barabasi}
\BIBentryALTinterwordspacing
A.-L. Barab\'{a}si and R.~Albert, ``{Emergence of Scaling in Random
  Networks},'' \emph{Science}, vol. 286, no. 5439, pp. 509--512, Oct. 1999.
  [Online]. Available: \url{http://dx.doi.org/10.1126/science.286.5439.509}
\BIBentrySTDinterwordspacing

\bibitem{PreA}
\BIBentryALTinterwordspacing
B.~Bollob\'{a}s, O.~Riordan, J.~Spencer, and G.~Tusn\'{a}dy, ``The degree
  sequence of a scale-free random graph process,'' \emph{Random Struct.
  Algorithms}, vol.~18, no.~3, pp. 279--290, May 2001. [Online]. Available:
  \url{http://dx.doi.org/10.1002/rsa.1009}
\BIBentrySTDinterwordspacing

\bibitem{eigentrust}
\BIBentryALTinterwordspacing
S.~D. Kamvar, M.~T. Schlosser, and H.~Garcia-Molina, ``The eigentrust algorithm
  for reputation management in p2p networks,'' in \emph{Proceedings of the 12th
  international conference on World Wide Web}, ser. WWW '03.\hskip 1em plus
  0.5em minus 0.4em\relax New York, NY, USA: ACM, 2003, pp. 640--651. [Online].
  Available: \url{http://doi.acm.org/10.1145/775152.775242}
\BIBentrySTDinterwordspacing

\bibitem{PEERTRUST}
\BIBentryALTinterwordspacing
L.~Xiong and L.~Liu, ``Peertrust: Supporting reputation-based trust for
  peer-to-peer electronic communities,'' \emph{IEEE Trans. on Knowl. and Data
  Eng.}, vol.~16, no.~7, pp. 843--857, Jul. 2004. [Online]. Available:
  \url{http://dx.doi.org/10.1109/TKDE.2004.1318566}
\BIBentrySTDinterwordspacing

\bibitem{FuzzyTrust}
\BIBentryALTinterwordspacing
S.~Song, K.~Hwang, R.~Zhou, and Y.-K. Kwok, ``Trusted p2p transactions with
  fuzzy reputation aggregation,'' \emph{IEEE Internet Computing}, vol.~9,
  no.~6, pp. 24--34, Nov. 2005. [Online]. Available:
  \url{http://dx.doi.org/10.1109/MIC.2005.136}
\BIBentrySTDinterwordspacing

\bibitem{powertrust}
\BIBentryALTinterwordspacing
R.~Zhou and K.~Hwang, ``Powertrust: A robust and scalable reputation system for
  trusted peer-to-peer computing,'' \emph{IEEE Trans. Parallel Distrib. Syst.},
  vol.~18, no.~4, pp. 460--473, Apr. 2007. [Online]. Available:
  \url{http://dx.doi.org/10.1109/TPDS.2007.1021}
\BIBentrySTDinterwordspacing

\bibitem{gossiptrust}
R.~Zhou, K.~Hwang, and M.~Cai, ``Gossiptrust for fast reputation aggregation in
  peer-to-peer networks,'' \emph{IEEE Transactions on Knowledge and Data
  Engineering}, vol.~20, no.~9, pp. 1282--1295, 2008.

\bibitem{pet}
\BIBentryALTinterwordspacing
Z.~Liang and W.~Shi, ``Pet: A personalized trust model with reputation and risk
  evaluation for p2p resource sharing,'' in \emph{Proceedings of the
  Proceedings of the 38th Annual Hawaii International Conference on System
  Sciences - Volume 07}, ser. HICSS '05.\hskip 1em plus 0.5em minus 0.4em\relax
  Washington, DC, USA: IEEE Computer Society, 2005, pp. 201.2--. [Online].
  Available: \url{http://dx.doi.org/10.1109/HICSS.2005.493}
\BIBentrySTDinterwordspacing

\bibitem{Powerlaw}
\BIBentryALTinterwordspacing
S.~Saroiu, P.~Gummadi, and S.~Gribble, ``{A Measurement Study of Peer-to-Peer
  File Sharing Systems},'' 2002. [Online]. Available:
  \url{http://citeseerx.ist.psu.edu/viewdoc/summary?doi=10.1.1.61.4223}
\BIBentrySTDinterwordspacing

\bibitem{ruchir2}
R.~Gupta and Y.~N. Singh, ``Trust estimation in peer-to-peer network using
  blue,'' \emph{CoRR}, vol. abs/1304.1649, 2013.

\bibitem{Kempe}
\BIBentryALTinterwordspacing
D.~Kempe, A.~Dobra, and J.~Gehrke, ``Gossip-based computation of aggregate
  information,'' in \emph{Proceedings of the 44th Annual IEEE Symposium on
  Foundations of Computer Science}, ser. FOCS '03.\hskip 1em plus 0.5em minus
  0.4em\relax Washington, DC, USA: IEEE Computer Society, 2003, pp. 482--.
  [Online]. Available: \url{http://dl.acm.org/citation.cfm?id=946243.946317}
\BIBentrySTDinterwordspacing

\bibitem{Shah}
\BIBentryALTinterwordspacing
S.~Boyd, A.~Ghosh, B.~Prabhakar, and D.~Shah, ``Randomized gossip algorithms,''
  \emph{IEEE/ACM Trans. Netw.}, vol.~14, no.~SI, pp. 2508--2530, Jun. 2006.
  [Online]. Available: \url{http://dx.doi.org/10.1109/TIT.2006.874516}
\BIBentrySTDinterwordspacing

\bibitem{Powerlaw1}
D.~Group, ``Gnutella: To the bandwidth barrier and beyond,'' 2000.

\bibitem{Powerlaw2}
H.~Chen, H.~Jin, J.~Sun, D.~Deng, and X.~Liao, ``Analysis of large-scale
  topological properties for peer-to-peer networks,'' in \emph{Cluster
  Computing and the Grid, 2004. CCGrid 2004. IEEE International Symposium on},
  2004, pp. 27--34.

\bibitem{Flavio}
F.~Chierichetti, S.~Lattanzi, and A.~Panconesi, ``Rumor spreading in social
  networks,'' \emph{Theor. Comput. Sci.}, vol. 412, no.~24, pp. 2602--2610,
  2011.

\bibitem{randomizedbroadcast}
U.~Feige, D.~Peleg, P.~Raghavan, and E.~Upfal, ``Randomized broadcast in
  networks.'' \emph{Random Struct. Algorithms}, vol.~1, no.~4, pp. 447--460,
  1990.

\bibitem{UB}
G.~Grimmett and D.~Welsh, \emph{Probability: An Introduction}, 1986.

\end{thebibliography}
\bibliographystyle{ieeetran}
\appendix[proof of theorem~\ref{diff}]
We can see the property of \emph{mass conservation} (proposition ~\ref{massconv}) \cite{Kempe} holds in this case as well.
\begin{proposition}
\label{massconv}
Under the differential Push protocol with Uniform Gossip, the sum
of all of $i^{th}$ node's contributions at all nodes j is $\sum\nolimits_{j}c_{n,i,j}^{m}=1$ and hence the sum of all weights is $\sum\nolimits_{j}g_{n,j}^{m}=N$
\end{proposition}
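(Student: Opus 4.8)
The plan is to prove both claims by induction on the number of gossip steps $n$, exploiting the fact that a single differential push is mass-preserving. The first identity, $\sum_{j}c_{n,i,j}^{m}=1$, is the substantive one; the weight identity $\sum_{j}g_{n,j}^{m}=N$ then follows from it by a double summation using the already-stated definition $g_{n,j}^{m}=\sum_{i}c_{n,i,j}^{m}$.

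First I would make the update rule for the contribution vectors explicit. At step $n$ node $r$ holds the vector $\mathbf{c}_{n,r}^{m}$, computes its push-count $k_{r}$, and sends the scaled vector $\frac{1}{k_{r}+1}\mathbf{c}_{n,r}^{m}$ to the $k_{r}$ randomly chosen neighbours together with itself, i.e. to exactly $k_{r}+1$ destinations. The new vector at any receiver $j$ is the sum of everything it receives, so if $R_{j}$ denotes the set of nodes that pushed to $j$ in this step (including $j$ itself), then componentwise $c_{n+1,i,j}^{m}=\sum_{r\in R_{j}}\frac{1}{k_{r}+1}c_{n,i,r}^{m}$.

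The key step is the inductive argument applied to the $i$-th component summed over all receivers. For the base case $n=0$, each node's contribution vector has a single unit entry at its own index, so $\sum_{j}c_{0,i,j}^{m}=c_{0,i,i}^{m}=1$. For the inductive step I would assume $\sum_{j}c_{n,i,j}^{m}=1$ and sum the update rule over all receiving nodes $j$. Since each node $r$ pushes to exactly $k_{r}+1$ destinations, its $i$-th component is delivered $k_{r}+1$ times, each carrying weight $\frac{1}{k_{r}+1}$, so the total amount of node $r$'s $i$-th component spread across the whole network is $(k_{r}+1)\cdot\frac{1}{k_{r}+1}c_{n,i,r}^{m}=c_{n,i,r}^{m}$. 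Interchanging the order of summation then gives $\sum_{j}c_{n+1,i,j}^{m}=\sum_{r}c_{n,i,r}^{m}=1$, closing the induction. The weight identity follows immediately: $\sum_{j}g_{n,j}^{m}=\sum_{j}\sum_{i}c_{n,i,j}^{m}=\sum_{i}\sum_{j}c_{n,i,j}^{m}=\sum_{i}1=N$.

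The only delicate point, and the main thing to get right, is the bookkeeping of the push targets: I must verify that the scaling factor $\frac{1}{k_{r}+1}$ is matched exactly by the $k_{r}+1$ destinations (the $k_{r}$ chosen neighbours plus the self-loop), so that each node's mass is neither created nor destroyed, irrespective of the randomly chosen targets or the differing values of $k_{r}$. Once this exact cancellation is established, the randomness of the gossip plays no role and both identities hold deterministically for every realisation of the protocol.
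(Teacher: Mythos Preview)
Your argument is correct. The paper does not actually give its own proof of this proposition; it simply asserts that the mass-conservation property of \cite{Kempe} carries over to the differential-push setting and states the proposition without further justification. What you have done is supply the explicit verification: the induction on $n$ together with the observation that each sender $r$ distributes $k_{r}+1$ copies each scaled by $\frac{1}{k_{r}+1}$, so that $\sum_{j}c_{n+1,i,j}^{m}=\sum_{r}c_{n,i,r}^{m}$ regardless of the random choice of targets or the heterogeneity of the $k_{r}$'s. This is exactly the adaptation of Kempe--Dobra--Gehrke's argument to the variable-fan-out case, and your remark that the identities hold deterministically for every realisation (not merely in expectation) is the right emphasis. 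There is nothing to correct.
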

\begin{proof}[Proof of theorem~\ref{diff}]
As we know when a node will get equal contribution from every node it will reach the average value. Taking variance around the mean value of the contributions from all the nodes at a particular node will give the level of convergence at one node. If we sum these variances for all the nodes, we will get the idea about the convergence of network. We are just referring to reputation of node $m$ in this proof, and super script $m$ have not been explicitly shown. It means $c_{nij}^{m}=c_{nij}$ and $g_{nj}^{m}=g_{nj}$. The variance at node j will be
$E(c_{n,i,j}-\overline{c_{n,i,j}})^{2}$ i.e. $\frac{1}{N}\sum\limits_{i}(c_{n,i,j}-\frac{g_{n,j}}{N})^{2}$. As this quantity is small, we can drop $\frac{1}{N}$. This will still give the idea about the convergence. Further adding the variance at all the nodes gives us the idea about further convergence in the whole network. We call this as potential function $\psiup_{n}$
\begin{equation}
\psiup_{n}=\sum_{j,i}\left(c_{n,i,j}-\frac{g_{n,j}}{N}\right)^2
\end{equation}
Let us study $\psiup$ for p-push gossip, i.e. when every node is making p pushes to p nodes. Here we are assuming that node chooses every node including itself for push independently. Here f(k)=j means  that a node k chooses a node j and pushes gossip pair. The $d_{n,j}$ and $g_{n,j}$ are divided by $(p+1)$; one part is always retained by the node and remaining are used for $p$ push.
\begin{eqnarray}
c_{n+1,i,j}&=&\frac{1}{p+1}c_{n,i,j}+\frac{1}{p+1}\sum\limits_{k:f(k)=j}c_{n,i,k}.\\
g_{n+1,j}&=&\sum\limits_{i}c_{n+1,i,j}\\
\nonumber&=&\sum\limits_{i}\frac{1}{p+1}c_{n,i,j}+\sum\limits_{i}\frac{1}{p+1}\sum\limits_{k:f(k)=j}c_{n,i,k}\\
\nonumber&=&\frac{1}{p+1}g_{n,j}+\frac{1}{p+1}\sum\limits_{k:f(k)=j}g_{n,k}\\
\nonumber As\\
\psiup_{n+1}&=&\sum_{j,i}\left(c_{n+1,i,j}-\frac{g_{n+1,j}}{N}\right)^2, 
\end{eqnarray}
substituting the values of $c_{n+1,i,j}$ and $g_{n+1,j}$, we get
\begin{eqnarray}
\nonumber \psiup_{n+1}&=&\sum\limits_{j,i}\left(\frac{1}{p+1}\left(c_{n,i,j}-\frac{g_{n,j}}{N}\right)+ \sum\limits_{k:f(k)=j}\frac{1}{p+1}\left(c_{n,i,k}-\frac{g_{n,k}}{N}\right)\right)^2\\
\nonumber &=& \frac{1}{(p+1)^{2}}\sum\limits_{j,i}\left(c_{n,i,j}-\frac{g_{n,j}}{N}\right)^2\\ \nonumber &&+ \frac{1}{(p+1)^{2}}\sum\limits_{j,i}\sum\limits_{k:f(k)=j}\left(c_{n,i,k}- \frac{g_{n,k}}{N}\right)^2\\
\nonumber &&+ \frac{2}{(p+1)^{2}}\sum\limits_{j,i}\sum\limits_{k:f(k)=j}\left(c_{n,i,j}-\frac{g_{n,j}}{N}\right) \left(c_{n,i,k}-\frac{g_{n,k}}{N}\right)\\
\nonumber &&+ \frac{1}{(p+1)^{2}}\sum\limits_{j,i}\sum\limits_{\hat{k}}\sum\limits_{k:f(k)=f(\hat{k})=j,k\neq \hat {k}}\left(c_{n,i,k}-\frac{g_{n,k}}{N}\right)\\\nonumber && \left(c_{n,j,\hat{k}}-\frac{g_{n,\hat{k}}}{N}\right).
\end{eqnarray}
In second term of the previous equation, we are doing summation over $i$, $j$ and $k:f(k)=j$. Each node $k$ is contributing $p$ copies to its $p$ neighbors. If this contribution is summed over $k$, it should be equal to contribution received by each node $j$ when summed over all nodes $j$. Thus \begin{equation}
\sum_k p ( c_{n,i,k} - { g_{n,k} \over N } )^2 = \sum_j \sum_{k: f(k)=j} (c_{n,i,k} - { g_{n,k} \over N } )^2
\end{equation}.
\begin{eqnarray}
\nonumber\psiup_{n+1} &=& \frac{1}{(p+1)^{2}}\sum\limits_{j,i}\left(c_{n,i,j}-\frac{g_{n,j}}{N}\right)^2+ \frac{p}{(p+1)^{2}}\sum\limits_{j,i}\left(c_{n,i,j}- \frac{g_{n,j}}{N}\right)^2\\
\nonumber &&+ \frac{2}{(p+1)^{2}}\sum\limits_{j,i}\sum\limits_{k:f(k)=j}\left(c_{n,i,j}-\frac{g_{n,j}}{N}\right) \left(c_{n,i,k}-\frac{g_{n,k}}{N}\right)\\
\nonumber &&+ \frac{2}{(p+1)^{2}}\sum\limits_{j,i}\sum\limits_{k\neq \hat {k}:f(k)=f(\hat{k})=j}\left(c_{n,i,k}-\frac{g_{n,k}}{N}\right) \left(c_{n,j,\hat{k}}-\frac{g_{n,\hat{k}}}{N}\right)\\
\nonumber &=& \frac{1}{(p+1)}\psiup_{n}+\frac{2}{(p+1)^{2}}\\\nonumber&&\sum\limits_{j,i,k:f(k)=j} \left(c_{n,i,j}-\frac{g_{n,j}}{N}\right) \left(c_{n,i,k}-\frac{g_{n,k}}{N}\right)+ \frac{2}{(p+1)^{2}}\\ 
\nonumber && \sum\limits_{j,i}\sum\limits_{\hat{k}}\sum\limits_{k:f(k)=f(\hat{k}),\hat{k}\neq k}\left(c_{n,i,k}-\frac{g_{n,k}}{N}\right) \left(c_{n,i,\hat{k}}-\frac{g_{n,\hat{k}}}{N}\right).
\end{eqnarray}
We know that node will choose a node randomly among its neighbours. Let us assume that the degree of said node $k$ is $d_{k}$ with probability $P_{d_{k}}$. Out of remaining nodes, we can form group of $d_k$ nodes in $N-1\choose d_{k}$ ways. If we fix one node say $j$ as one of the neighbour, then there are ${N-2} \choose C_{d_k -1}$ ways of having other $d_k -1$ neighbours. Thus the probability of j being a neighbour of a node k having degree $d_k$ will $ { {{N-2} \choose{d_k -1}} \over { {N-1} \choose {d_k}}}$. Further the probability that $k$ will choose $j$ will be $ 1 \over d_k $.
\begin{eqnarray}
\nonumber P[f(k)=j]&=&\frac{{N-2\choose d_{k}-1}}{{N-1\choose d_{k}}}\cdot P_{d_{k}}\cdot \frac{1}{{d_{k}}}   \\
 &=& \frac{P_{d_{k}}}{N-1}\\
Similarly\\
 P[\hat{k}\neq k,f(k) = f(\hat{k})=j] &=& \frac{P_{d_{k}}}{N-1}\cdot \frac{P_{d_{\hat{k}}}}{N-1}
\end{eqnarray}
The $P_{d_{k}}$ is the probability that a node has degree $d_{k}$. For networks generated by PA Model, 
\begin{equation}
\nonumber P(d_{k})\sim d_{k}^{-\gamma}.
\end{equation} 
Here $\gamma$ is network exponent. Hence,
\begin{eqnarray}
\nonumber  E[\psiup_{n+1}\mid \psiup_{n}] &=& \frac{1}{(p+1)}\psiup_{n}+\frac{2}{(p+1)^{2}} \sum\limits_{j,k,i}\left(c_{n,i,j}-\frac{g_{n,j}}{N}\right) \\ \nonumber &&\left(c_{n,i,k}-\frac{g_{n,k}}{N}\right)P[f(k) = j]\\
\nonumber &&+ \frac{2}{(p+1)^{2}}\sum\limits_{i,j}\sum\limits_{\hat{k}}\sum\limits_{k:\hat{k}\neq k}\left(c_{n,i,k}-\frac{g_{n,k}}{N}\right) \\ \nonumber && \left(c_{n,i,\hat{k}}-\frac{g_{n,\hat{k}}}{N}\right)P[\hat{k}\neq k,f(k) = f(\hat{k})=j]\\
\nonumber &=& \frac{1}{p+1}\psiup_{n}+\frac{2}{(p+1)^{2}} \sum\limits_{j,k,i}\left(c_{n,i,j}-\frac{g_{n,j}}{N}\right)\\ \nonumber && \left(c_{n,i,k}-\frac{g_{n,k}}{N}\right)\frac{P_{d_{k}}}{N-1}
+\frac{N}{(p+1)^{2}}\\ \nonumber && \sum\limits_{i}\sum\limits_{\hat{k}}\sum\limits_{k}\left(c_{n,i,k}-\frac{g_{n,k}}{N}\right)  \left(c_{n,i,\hat{k}}-\frac{g_{n,\hat{k}}}{N}\right) \frac{P_{d_{k}}}{N-1}\\ 
\nonumber &&\cdot \frac {P_{d_{\hat{k}}}}{N-1}- \frac{N}{(p+1)^{2}}\sum\limits_{i,k}\left(c_{n,i,k}-\frac{g_{n,k}}{N}\right)^2 \frac{P_{d_{k}}^{2}}{(N-1)^{2}}\\
\nonumber &=& \frac{1}{p+1}\psiup_{n}+\frac{2}{(p+1)^{2}(N-1)} \\ \nonumber &&\sum\limits_{i}\sum\limits_{j}\left(c_{n,i,j}-\frac{g_{n,j}}{N}\right) \sum\limits_{k} \left(c_{n,i,k}-\frac{g_{n,k}}{N}\right)P_{d_{k}}\\
\nonumber &&+\frac{N}{(p+1)^{2}(N-1)^2}\sum\limits_{i}\sum\limits_{k}\left(c_{n,i,k}-\frac{g_{n,k}}{N}\right)P_{d_{k}}\\ \nonumber && \sum\limits_{\hat{k}}\left(c_{n,i,\hat{k}}-\frac{g_{n,\hat{k}}}{N}\right) P_{d_{\hat{k}}}\\ 
\nonumber &&-\frac{N}{(p+1)^{2}(N-1)^2}\sum\limits_{k,i}\left(c_{n,i,k}-\frac{g_{n,k}}{N}\right)^2 P_{d_{k}}^{2}
\end{eqnarray}
Let us assume that the maximum value of $P_{d}$ is $P_{d_{max}}$ and minimum value is $P_{d_{min}}$ and the difference of $P_{d_{max}}$ and $P_{d_{min}}$ is $K_{c}$, then
\begin{eqnarray}
\nonumber  E[\psiup_{n+1}\mid \psiup_{n}] &\leq & \frac{1}{p+1}\psiup_{n}\\ \nonumber &&+\frac{2}{(p+1)^{2}(N-1)} \sum\limits_{i}\left(\sum\limits_{j}c_{n,i,j}-\sum\limits_{j}\frac{g_{n,j}}{N}\right)\\ \nonumber && \sum\limits_{k} \left(c_{n,i,k}-\frac{g_{n,k}}{N}\right)P_{d_{k}}\\
\nonumber &&+ \frac{N}{(p+1)^{2}(N-1)^2}\sum\limits_{i}\left(\sum\limits_{k}c_{n,i,k}P_{d_{max}}-
\sum\limits_{k}\frac{g_{n,k}}{N}P_{d_{min}}\right)\\ \nonumber && \left(\sum\limits_{\hat{k}}c_{n,i,\hat{k}}P_{d_{max}}-
\sum\limits_{\hat{k}}\frac{g_{n,\hat{k}}}{N}P_{d_{min}}\right)\\ 
\nonumber &&-\frac{N}{(p+1)^{2}(N-1)^2}\sum\limits_{i,k}\left(c_{n,i,k}-\frac{g_{n,k}}{N}\right)^2 P_{d_{min}}^{2}
\end{eqnarray}
Applying mass conservation $\sum\limits_{j}c_{n,i,j}$, $\sum\limits_{k}c_{n,i,k}$, $\sum\limits_{j}\frac{g_{n,j}}{N}$ and $\sum\limits_{k}\frac{g_{n,k}}{N}$ will be equal to $1$,  so the second term will become zero and third term will become $N\cdot K_{c}^{2}$. Fourth term is always non negative so removing this term will not affect the bound. So
\begin{eqnarray}
\nonumber E[\psiup_{n+1}\mid \psiup_{n}] &\leq& \frac{1}{p+1}\psiup_{n}+\frac{1}{(p+1)^{2}(N-1)^2}\cdot N^2 \cdot K_{c}^{2}\\ \nonumber &&- \frac{N}{(p+1)^{2}(N-1)^2}P_{d_{k}}^{2} \sum\limits_{k,i}\left(c_{n,i,k}-\frac{g_{n,k}}{N}\right)^2\\
\nonumber &\leq & \frac{1}{p+1}\psiup_{n}+\frac{ K_{c}^{2}N^2}{(p+1)^{2}(N-1)^2} \\\nonumber&& - \frac{N}{(p+1)^{2}(N-1)^2}P_{d_{min}}^{2} \psiup_{n}
 \\\nonumber&\leq  & \frac{1}{p+1}\psiup_{n}+\frac{K_{c}^{2}N^2}{(p+1)^{2}(N-1)^2}
\end{eqnarray}
In the last line we use the fact that $\frac{N}{(p+1)^{2}(N-1)^2}P_{d_{min}}^{2} \psiup_{n}$ will always remain non negative. 
 We know that $d_{min}$ is $2$. If we consider the value of $\gamma$ to be 2, maximum value of $ K_{c}^{2}$ will be $\frac{1}{16}$ considering $P_{d_{max}}$ to be zero and the maximum possible value of $\frac{N^2}{(N-1)^2}$ is 4. Thus,  
\begin{equation}
\label{finalcondfi}
E[\psiup_{n+1}\mid \psiup_{n}] \leq \frac{1}{p+1}\psiup_{n}+\frac{1}{4\cdot (p+1)^{2}}.
\end{equation}
Now we will calculate the value of $\psiup_{0}$. We know that initially the contribution vector $\bf{c_{j}}$ contains only single non-zero value i.e. contribution received from it self and that value is $1$, rest all $N-1$ elements are $0$. So
\begin{eqnarray}
\nonumber\psiup_{0}&=&\sum\limits_{j,i}\left(c_{0,i,j}-\frac{g_{0,j}}{N}\right)^2\\
\nonumber &=& \sum\limits_{j}[\left(c_{0,1,j}-\frac{g_{0,j}}{N}\right)^2+ \left(c_{0,2,j}-\frac{g_{0,j}}{N}\right)^2+...+\\ \nonumber &&\left(c_{0,j,j}-\frac{g_{0,j}}{N}\right)^2+...+ \left(c_{0,N,j}-\frac{g_{0,j}}{N}\right)^2]\\
\nonumber &=&\sum\limits_{j}[\left(0-\frac{1}{N}\right)^2+ \left(0-\frac{1}{N}\right)^2+...+\\ \nonumber &&\left(1-\frac{1}{N}\right)^2+...+ \left(0-\frac{1}{N}\right)^2]\\
\nonumber &=& \sum\limits_{j}[1+\frac{1}{N^{2}}-\frac{2}{N}+(N-1)\cdot \frac{1}{N^{2}}]\\
 &=& \sum\limits_{j}[1-\frac{1}{N}]=N-1
\end{eqnarray}
Now we will substitute the value of $\psiup_{0}$ in (~\ref{finalcondfi}). This will give us the bound on $\psiup_{n}$
\begin{eqnarray}
E[\psiup_{1}\mid \psiup_{0}] &\leq& \frac{1}{p+1}\psiup_{0}+\frac{1}{4\cdot (p+1)^{2}}\\
\nonumber E[\psiup_{1}] &\leq& \frac{1}{p+1}(N-1)+\frac{1}{4\cdot (p+1)^{2}}\\
Similarly\\
\nonumber E[\psiup_{2}] &\leq& \frac{1}{p+1}(\frac{1}{p+1}(N-1)+\frac{1}{4\cdot (p+1)^{2}})\\
\nonumber&&+\frac{1}{16\cdot (p+1)^{2}}\\
\nonumber&=& \frac{N-1}{(p+1)^{2}}+\frac{1}{16(p+1)^{3}}+\frac{1}{4(p+1)^{2}}\\
\nonumber E[\psiup_{n}]&\leq& \frac{N-1}{(p+1)^{n}}+\frac{1}{16(p+1)^{n+1}}\\\nonumber&& +\frac{1}{4(p+1)^{n}}+...+\frac{1}{4(p+1)^{2}}\\
\nonumber &=& (N-1) \cdot (p+1)^{-n} +\frac{1}{4\cdot (p+1)p}\\\nonumber&&-\frac{1}{4p(p+1)^{n+1}}\\
\nonumber & \leq & (N-1) \cdot (p+1)^{-n} +\frac{1}{4\cdot(p+1)p}
\end{eqnarray}
It can be seen that right hand side of the above equation is maximum when p=1. It means potential function is decaying at the slowest rate for p=1. So time taken in convergence for p=1 will be maximum. For normal push, algorithm will act as upper bound for differential push algorithm i.e. combination of different values of positive integer p's. So taking p=1; 
\begin{eqnarray}
\nonumber E[\psiup_{n}]&\leq& (N-1) \cdot 2^{-n} + \frac{1}{8}\\
&\leq & (N-1) \cdot 2^{-n} \cdot k_{d}
\end{eqnarray}
Here $k_{d}$ is an integer constant that is greater than the ratio of maximum value of $(N-1) \cdot 2^{-n}$ and $\frac{1}{8}$. This can be seen that $k_{d}$ will depend on the number of steps required for convergence. 

After gossiping for $n=log_{2}(N-1)+log_{2}k_{d}+log_{2}\frac{1}{\xiup}$ steps,
\begin{eqnarray}
\nonumber E[\psiup_{n}]&\leq& (N-1)\cdot 2^{-(log_{2}(N-1))}\cdot 2^{-(log_{2}(k_{d}))}\\\nonumber&&\cdot2^{-(log_{2}(1/\xiup))}\cdot k_{d}\\
E[\psiup_{n}]&\leq& \xiup\\
\nonumber \sum_{j,i}\left(c_{n,i,j}-\frac{g_{n,j}}{N}\right)^2 &\leq& \xiup
\end{eqnarray}
If summation of some non-negative numbers are less than $\xiup$ then individually each number must be less than $\xiup$, i.e. $|c_{n,i,j}-\frac{g_{n,j}}{N}|\leq \xiup^{\frac{1}{2}} $ for all nodes $i$.

If we consider weight as an information to be spread among the nodes, according to theorem~\ref{spr}, information will reach to all the nodes in a power law network with high probability, in $n=(log_{2}N)^{2}$ rounds. After these n rounds every node will receive at least $2^{-n}$ weight. So applying union bound \cite{UB} over weight spreading and potential decay event (We have seen potential is decaying at every step) and dividing with $g_{n,j}$ gives $|\frac{c_{n,i,j}}{g_{n,j}}-\frac{1}{N}|\leq \xiup$  at steps $O((log_{2}N)^{2}+log_{2}N+log_{2}k_{d}+log_{2}\frac{1}{\xiup})$ i.e. within $O((log_{2}N)^{2}+log \frac{1}{\xiup})$ steps with high probability.
\end{proof}

\end{document}